\def\twocolumnmode{} 
\newtheorem{theorem}{Theorem}
\newtheorem{definition}[theorem]{Definition}
\newtheorem{lemma}[theorem]{Lemma}
\newenvironment{proof}[1][Proof]{\textbf{#1.} }{\ \rule{0.5em}{0.5em}}
\newcommand{\Ts}{T_s}  
\newcommand{\T}{t} 
\newcommand{\K}{K}
\newcommand{\idx}{f} 		
\newcommand{\blk}{n}
\newcommand{\Y}{\textbf{Y}}
\newcommand{\y}{\textbf{y}}
\newcommand{\snr}{{P/N}}
\newcommand{\Pu}{\textsf{P}}
\newcommand{\Yv}[1]{\textbf{Y}_{#1}}
\newcommand{\yv}[1]{\textbf{y}_{#1}}
\begin{document}
\title{Models and Information Rates for Multiuser Optical Fiber Channels with Nonlinearity and Dispersion}

\newcommand{\codered}[1]{{\color[rgb]{1,0,0} {#1}}}

\author{Hassan~Ghozlan,~\IEEEmembership{Member,~IEEE},
        and~Gerhard~Kramer,~\IEEEmembership{Fellow,~IEEE}
\thanks{
The work of H. Ghozlan was supported by a USC Annenberg Fellowship and 
the National Science Foundation (NSF) under Grant CCF-09-05235.
The work of G. Kramer was supported by an Alexander von Humboldt Professorship endowed by
the German Federal Ministry of Education and Research, as well as by the NSF under Grant CCF-09-05235.
Part of the material in this paper was presented
at the IEEE International Symposium on Information Theory, Austin, TX, June, 2010 and
at the IEEE International Symposium on Information Theory, Saint Petersburg, Russia, July/August, 2011.}
\thanks{H. Ghozlan was with the Department
of Electrical Engineering, University of Southern California, Los Angeles,
CA 90089, USA.
He is now with Intel Corporation, Hillsboro, OR 97124, USA.
(e-mail: hassan.ghozlan@intel.com).}
\thanks{G. Kramer is with the Institute for Communications Engineering, 
Technical University of Munich, 80333 Munich, Germany (email: gerhard.kramer@tum.de).}
}

\markboth{Journal,~Vol.~, No.~, Month~Year}
{Shell \MakeLowercase{\textit{et al.}}: Bare Demo of IEEEtran.cls for Journals}

\maketitle

\IEEEpeerreviewmaketitle

\begin{abstract}
Two discrete-time interference channel models are developed 
for information transmission over a single span of optical fiber using wavelength-division multiplexing (WDM) and lumped amplification. 
The models are derived from the nonlinear Schr\"odinger (NLS) equation 
by including the nonlinear phenomena of self-phase modulation (SPM) and cross-phase modulation (XPM)
but ignoring four-wave mixing (FWM), polarization effects and group velocity dispersion (GVD) within WDM bands.
The first model also ignores GVD across WDM bands, referred to as group velocity mistmatch (GVM).
For the case of two users, 
a new technique called \textit{interference focusing} is proposed where each carrier achieves the capacity pre-log 1, 
thereby doubling the pre-log of 1/2 achieved by using conventional methods. 
For three users, interference focusing is also useful under certain conditions.
The second model captures GVM and the effect of filtering at the receivers in addition to SPM and XPM.
In a 3-user system, it is shown that all users can achieve the maximum pre-log factor 1 simultaneously
by using interference focusing, a time-limited pulse and a bank of filters at the receivers.
\end{abstract}

\begin{IEEEkeywords}
Optical fiber, wavelength-division multiplexing, 
Kerr nonlinearity, cross-phase modulation, 
chromatic dispersion, group velocity mismatch,
interference channel.
\end{IEEEkeywords}

\section{Introduction}
\newcommand{\alert}[1]{#1}
\newcommand{\SPM}{\text{SPM}} 
\newcommand{\XPM}{\text{XPM}}	
\newcommand{\FWM}{\text{FWM}}	
\newcommand{\ASE}{\text{ASE}}	
\newcommand{\eff}{\text{eff}}
\newcommand{\scripttwo}[2]{ \scriptsize{\begin{array}{c} #1 \\ #2  \end{array}} }
\newcommand{\NLI}{\text{NLI}}	

\newcommand{\tnr}[1]{{\textnormal{#1}}}
\newcommand{\ie}{i.e.,~}
\newcommand{\E}{\mathbb{E}}
\newcommand{\PASE}{P_\text{ASE}}
\newcommand{\Zn}{Z_{k}}\newcommand{\zn}{z_{k}}
\newcommand{\Znt}{\tilde{Z}_{k}}
\newcommand{\Ex}{\mathbb{E}}

\IEEEPARstart{T}{he} majority of traffic in core networks is carried by optical fiber.
Understanding the ultimate limits of communication over optical fiber is thus of great importance
and would help to provide guidelines for designing networks.
An appealing property of fiber is that it has low attenuation over a large range of frequencies
which allows the transmission of broadband signals over long distances.
Optical amplifiers compensate the power loss
but they add noise.
Moreover, a signal propagating in fiber experiences distortions due to chromatic dispersion and Kerr nonlinearity. 
The fiber channel thus suffers from three main impairments of different nature: noise, dispersion, and Kerr nonlinearity. 
The interaction between these three phenomena makes the problem of estimating the capacity challenging \cite{JLT2010}.

\subsection{Capacity Estimates}
There are many approaches to estimate the capacity of optical fiber channels. 
The technical papers fall into two main categories: they either study the capacity  of simplified models, 
or they develop capacity lower bounds (achievable rates) on the full model by simulation. 
We next review these papers. Our document belongs to the former category.

Splett et al \cite{Splett1993} study a single-channel system and 
derive an approximate formula for the power spectral density of the intrachannel four-wave mixing (FWM) at the center frequency 
assuming the input signal has uncorrelated spectral components. 
They derive an achievable information rate expression by treating FWM as additive Gaussian noise. 
The information rate has a peak at a finite input power. 
They modify the power spectral density expression of FWM to obtain a similar result for multi-channel systems 
where cross-phase modulation (XPM) is ignored.
Narimanov and Mitra \cite{Narimanov2002}
study a single-channel transmission over a multi-span dispersive fiber link.
They use a perturbation technique to approximate the solution to the nonlinear Schr\"{o}dinger (NLS) equation assuming that the nonlinear term is small
and they derive a capacity expression.
Xiang and Zhang \cite{Xiang2011Perturb} extend some of the results of \cite{Narimanov2002}.

Mecozzi \cite{Mecozzi94} models the propagation of a single signal in a dispersionless fiber link, 
in which the fiber loss is compensated by distributed amplification.
Mecozzi derives an expression for the conditional distribution of the output field given the input field
by computing all (conditional) moments.
Turitsyn et al \cite{TuritsynPRL2003} also
study single-channel transmission 
over zero-dispersion fiber links.
They obtain the conditional distribution 
using techniques from quantum mechanics.
For Gaussian inputs, a sampling receiver and direct-detection,
a lower bound on capacity is derived
that grows logarithmically with the signal-to-noise ratio (SNR) with a pre-log = 1/2.
In \cite{Yousefi2010SumProd,Yousefi2010FokkerPlank,PerSampleIT2011},
Yousefi and Kschischang derive the conditional probability using two different approaches:
a sum-product approach and a Fokker-Planck differential equation approach.
Wei and Plant \cite{WeiPlantArxiv2006} make useful comments on the results of \cite{TuritsynPRL2003},\cite{Mitra2001} and \cite{Tang2001}.

Djordjevic et al \cite{DjordjevicJLT2005} 
study a single-channel system and
estimate numerically the achievable information rate for independent uniformly distributed inputs
when the intrachannel Kerr nonlinearity, chromatic dispersion and amplified spontaneous emission 
are taken into account.
They use a finite-state machine approach 
where the state is determined by a number of past and future inputs surrounding the current input,
and the conditional distribution of the output given the state
is approximated using histograms.
Ivakovic et al \cite{IvakovicJLT2007} follow \cite{DjordjevicJLT2005} and
propose an approximate expression for the conditional output distribution
when on-off keying (OOK) is used to circumvent the computation of histograms.
These methods are limited to low-order modulation for complexity reasons.

Mitra and Stark \cite{Mitra2001} 
study a wavelength division multiplexing (WDM) system in which XPM is the only nonlinear effect, i.e.,
they ignore FWM and assume that self-phase modulation (SPM) can be fully corrected.
A key simplification in \cite{Mitra2001} is approximating the sum of intensities of the interfering channels in the XPM term of the propagation equation by a Gaussian random process.
A lower bound on capacity (per WDM channel) is derived for Gaussian inputs
using the input-output covariance matrix.
The conclusion of \cite{Mitra2001} is that the lower bound has a peak and does not increase indefinitely with the input power.
Wegener et al \cite{Wegener2004}
also study WDM transmission over a multi-span dispersive fiber link.
To simplify the solution of the coupled propagation equations analytically,
the technique of \cite{Mitra2001} is used and the FWM is replaced with a Gaussian random process.
A lower bound on capacity is evaluated using the input-output covariance matrix.

Ho and Kahn \cite{HoKahnOFC2002}
study WDM transmission over a multi-span dispersive fiber link.
They argue that under constant-envelope (or constant-intensity) modulation
with uniform phase\footnote{We refer to constant-envelope modulation with uniform phase as \emph{ring} modulation.},
SPM and XPM cause only time-invariant phase shifts 
and hence the phase distortion is eliminated.
By modeling FWM as additive Gaussian noise, 
they obtain an estimate of the information rate achieved by constant-envelope modulation.
The FWM components from individual fiber spans are assumed to combine incoherently.

Tang \cite{Tang2001} studies WDM transmission over a single-span dispersion-free fiber link.
In this case, the propagation equation can be solved analytically in closed-form.
A lower bound on capacity is obtained for Gaussian inputs by computing the power spectral density of the input (the sum over all WDM channels), the power spectral density of the output (the overall WDM signal after propagation) and the cross-spectral density of the input and output.
Tang extends the results of \cite{Tang2001} to a multi-span dispersion-free fiber link in \cite{TangMultispan2001}
and then to a multi-span dispersive fiber link in \cite{Tang2002}.
In \cite{Tang2002}, a truncated Volterra series \cite{Ped1997Volterra} 
is used to approximate the solution to the NLS equation 
assuming that the effect of nonlinearity is small.
The lower bounds in \cite{Tang2001}, \cite{TangMultispan2001} and \cite{Tang2002} have a peak value at finite input powers.

Taghavi et al \cite{Taghavi2006} study WDM transmission over a single-span dispersive fiber link.
They use a (truncated) Volterra series solution to 
the propagation equation.
Each receiver uses a linear filter to compensate dispersion followed by a matched filter (matched to the transmitted pulse) whose output is sampled at the symbol rate.
Assuming that dispersion is weak (so that inter-symbol interference can be neglected),
a discrete-time memoryless model is obtained.
Each receiver has access to the received signal of all channels 
and thus this case is treated as a multiple-access channel.
It is found that nonlinearity does not affect the capacity to the first-order approximation (in the nonlinear coefficient)
and high rates are achieved by performing interference cancellation 
before decoding.
Moreover, single-channel detection (i.e., the decoder for a given user has access to the received signal at its own wavelength only)
is considered in two regimes: XPM-dominated and FWM-dominated regimes.
The capacity for single-channel detection is significantly reduced compared to the multiple access channel capacity.

Essiambre et al \cite{JLT2010}
review fundamental concepts of digital communications, information theory
and the physical phenomena present in transmission over optical fiber networks.
They estimate by numerical simulations capacity lower bounds for WDM using multi-ring constellations, 
different constellation shapings and different fiber dispersion maps. 
Nonlinear compensation through backpropagation of individual channels is used.
The trend in the various scenarios is that the capacity lower bound has a peak value at a finite launch power.

Bosco et al \cite{Bosco2011GN,Bosco2012GN:erratum}
study WDM transmission over uncompensated optical fiber links
with both distributed and lumped amplification.
They argue that, after digital signal processing (DSP) at the receiver,
the distribution of each of the received constellation points is approximately Gaussian 
with independent components, even in the absence of additive ASE noise. 
Hence, they adopt a model, called the Gaussian noise (GN) model, 
in which the impact of nonlinear propagation is approximated by excess additive Gaussian noise (see also \cite{Splett1993}).
Using the GN model, capacity estimates are derived.
In \cite{PoggioliniJLT2014GN}, Poggiolini discusses the GN model in depth.

Mecozzi and Essiambre \cite{MecozziJLT2012}
study multi-channel transmission over a dispersive fiber link with distributed amplification. 
They develop a first-order perturbation theory of the signal propagation
and simplify the expression for highly dispersive, or pseudolinear, transmission. 
The signal is linearly-modulated\footnote{The signal is the sum of modulated pulses.} at the transmitter 
and the detection apparatus at the receiver is made of an optical filter to separate the channel, mixing with a local oscillator 
and subsequent sampling at the symbol rate. 
By concentrating on inter-channel nonlinearity, in particular XPM, 
they derive a capacity estimate per channel.
An important observation is that
the kurtosis of
the constellation of the interfering channels is important in determining the system impairments. 

Secondini et al \cite{SecondiniJLT2013}
study WDM transmission over a dispersive fiber link.
FWM is neglected.
The key simplification is 
replacing the unknown intensities appearing in the propagation equation with those corresponding to linear propagation.
They then derive a first-order approximation to the solution
based on frequency-resolved logarithmic perturbations.
The approximate solution is used to develop a linear time-varying discrete-time model for the channel which is composed of the
optical fiber link followed by a back-propagation block (and thus it is assumed that SPM is fully compensated), a matched filter, 
and sampling at the symbol rate.
By using the theory of mismatched decoding, they compute 
the information rate achieved by independent and identically distributed (i.i.d.) Gaussian input symbols and a maximum likelihood \emph{symbol-by-symbol} detector 
designed for a memoryless additive white Gaussian noise (AWGN) auxiliary channel with the same covariance matrix as the true channel.
They also evaluate the information rate achieved by a maximum likelihood \emph{sequence} decoder 
designed for an auxiliary AWGN channel with inter-symbol interference, and with the same input-output covariance matrix as the true channel.

Dar et al \cite{DarAllerton2013} propose a block-memoryless discrete-time channel model 
for WDM transmission in the pseudo-linear regime in which XPM is the dominant nonlinear effect.
The model is a discrete-time phase noise channel
in which the phase noise process
models XPM  and is assumed to be a block-independent process, i.e., 
it remains unchanged within a block but changes independently between blocks.
It is assumed that the phase noise is (real) Gaussian with zero mean and a variance
that depends on the type of modulation.
For the proposed model, two lower bounds on capacity are developed: 
the first is tight in the low power regime 
while the second is better at high power.
In \cite{DarLetters2014, DarECOC2013}, Dar et al add an extra term 
to capture nonlinear effects that do not manifest themselves as phase noise.
Agrell et al \cite{AgrellJLT2014}
propose a discrete-time model called the \emph{finite-memory GN model}
for coherent long-haul fiber links without dispersion compensation.
Using the finite-memory GN model, they derive semi-analytic lower bounds for non i.i.d. inputs.
Numerical simulations show that the information rates 
of the finite-memory GN model 
are higher than the rates of the regular GN model.
We \alert{remark} that the proposed discrete-time model
is not derived from a continuous-time description of the system.

Yousefi and Kschischang
\cite{YousefiISIT2013,YousefiISIT2013Fiber,YousefiNFT1,YousefiNFT2,YousefiNFT3}
discuss the nonlinear Fourier transform (NFT), a method for 
solving a broad class of nonlinear differential equations,
and in particular for solving the NLS equation
for noiseless propagation.
They propose a scheme, 
called nonlinear frequency-division multiplexing (NFDM), 
which can be viewed as a nonlinear analogue of orthogonal frequency-division multiplexing (OFDM).
In NFDM, information is encoded in the 
NFT of the signal consisting of two components: a discrete and a continuous spectral function. 
By modulating non-interacting degrees of freedom of a signal, 
deterministic crosstalk between signal components due to dispersion and nonlinearity is eliminated,
i.e., inter-symbol and inter-channel interference are zero
if there is no noise.

\subsection{Contributions and Organization}

We develop discrete-time interference channel models for WDM transmission over a single span of both dispersionless and dispersive fiber.
The models are based on coupled differential equations that capture SPM, XPM and group velocity mismatch (GVM). 
Transmitters send linearly-modulated pulses while receivers use 
matched filters with symbol rate sampling (for dispersionless transmission) or banks of filters (for dispersive transmission).
Rather than using Gaussian codebooks, 
we design codebooks based on a new technique called \emph{interference focusing}.
We show that all users achieve a pre-log of 1 simultaneously by using interference focusing.
This paper extends the results in \cite{GhozlanISIT2010} and \cite{GhozlanGVM2011}.
More specifically, 
we extend the two-user model with a rectangular pulse in the non-zero GVM case to
a three-user model with a general time-limited (of one symbol interval) pulse 
and we also derive a capacity outer bound.
We highlight two aspects of our work (including \cite{GhozlanISIT2010} and \cite{GhozlanGVM2011}):
\begin{itemize}
	\item
	We study an \emph{interference channel} model for multiuser communication in nonlinear optical fiber.
	In contrast, most models in the literature reduce interference to be an additional source of noise and treat the problem as a \emph{point-to-point channel}.
	
	\item 
	We derive precise discrete-time models from continuous-time models with noise and filtering.
	In contrast, many publications derive or assume simplified discrete-time models 
	based on direct sampling of the continuous-time received signals without filtering.
\end{itemize}

The paper is organized as follows.
In Sec. \ref{sec:fiber_channel}, we describe the wave propagation equation in optical fiber and the impairments that arise in transmission.
We study the case of zero group velocity mismatch (zero dispersion) in Sec. \ref{sec:zero-gvm}.
We extend this model to non-zero group velocity mismatch in Sec. \ref{sec:nonzero-gvm}.
For both cases, we develop discrete-time interference channel models 
and show that a pre-log of 1 is achievable for all users, 
despite XPM that arises due to the fiber nonlinearity.
Sec. \ref{sec:ialign} relates interference focusing to interference alignment.
Sec. \ref{sec:conclusion} concludes the paper.

\subsection{Notation}
We use common notation for probability distributions and information-theoretic quantities. 
Random variables are usually written as uppercase letters and their realizations as lowercase letters. 
Probability distributions and densities are labeled with the random variables, e.g., 
the probability density of $X$ is written as $p_X(\cdot)$ and 
the conditional probability density of $Y$ given $X$ evaluated at $Y=y$ and $X=x$ is written as $p_{Y|X}(y|x)$. 
The expectation of $X$ is denoted by $\mathbb{E}[X]$.
The expressions $H(X)$, $H(Y|X)$, $H(XY)$ represent 
the entropy of $X$, the conditional entropy of $Y$ given $X$, and the joint entropy of $XY$.
The expressions $h(X)$, $h(Y|X)$, $h(XY)$ represent differential entropies. 
The mutual information of $X$ and $Y$ is written as $I(X;Y)$, and 
the mutual information of $X$ and $Y$ conditioned on Z is written as $I(X;Y|Z)$.

\section{Fiber Models} 
\label{sec:fiber_channel}
We next discuss noise, chromatic dispersion and Kerr nonlinearity in optical fiber.
Amplifiers add noise to the signal due to amplified spontaneous emission (ASE). 
The noise is typically modeled as a white Gaussian process.
There are two types of amplification: lumped and distributed.
In lumped amplification, $N_s$ amplifiers are inserted periodically over a fiber link of total length $L$ 
which creates $N_s$ spans, often each of the same length $L_s = L/N_s$.
A commonly-used lumped amplifier is the erbium-doped optical amplifier (EDFA).
In distributed amplification, the signal is amplified continuously as it propagates through the fiber.
Distributed amplification is accomplished by using Raman pumping.
For multispan lumped or distributed amplification, 
signal-noise interaction occurs because of fiber nonlinearity. 
However, there is no signal-noise interaction in the single-span lumped amplification case,
and this is the case we consider for the rest of the paper for simplicity.
This model is sometimes used as an approximation when the noise is weak and the launch power is low.

Dispersion arises because the medium absorbs energy through the oscillations of bound electrons, causing a {\it frequency} dependence of the material refractive index~\cite[p.~7]{Agrawal}. The Kerr effect is caused by anharmonic motion of bound electrons in the presence of an intense electromagnetic field, causing an {\it intensity} dependence of the material refractive index~\cite[p.~17, 165]{Agrawal}.

Suppose an optical field propagates at a center/carrier frequency $\omega_0$.
Let $A(z,t)$ be a complex number representing the slowly-varying component (or envelope) of a linearly-polarized, electric field at position $z$ and time $t$ in single-mode fiber.
We ignore polarization effects, i.e., a linearly-polarized input electric field remains linearly polarized during propagation.
The equation governing the evolution of $A(z,t)$ as the wave propagates through the fiber is \cite[p. 44]{Agrawal}
\begin{align}
	\frac{\partial A}{\partial z} + \beta_1 \frac{\partial A}{\partial t}  + i \frac{\beta_2}{2} \frac{\partial^2 A}{\partial t^2} 
	= i \gamma |A|^2 A
	\label{eq:propagation_eq}
\end{align}
where $i = \sqrt{-1}$, 
$\beta_1$ is the reciprocal of the group velocity,
$\beta_2$ is the group velocity dispersion (GVD) coefficient, and
$\gamma$ is the nonlinear coefficient.
It is common to specify GVD through 
the dispersion parameter $D$ which is related to $\beta_2$ by \cite[p. 11]{Agrawal}
\begin{align}
D = - \frac{2 \pi c \beta_2}{\lambda_0^2}
\label{eq:d_coeff_def}
\end{align}
where $\lambda_0$ is the wavelength in free-space, i.e., $\lambda_0 = 2\pi c/\omega_0$,
and $c$ is the speed of light in free space.
By defining a retarded-time reference frame with $T=t-\beta_1 z$, we have
\begin{align}
	i \frac{\partial A}{\partial z} - \frac{\beta_2}{2} \frac{\partial^2 A}{\partial T^2} + \gamma |A|^2 A = 0
	\label{eq:nls_eq}
\end{align}
which is referred to as the nonlinear Schr\"{o}dinger (NLS) equation 
because of its similarity to the Schr\"{o}dinger equation with a nonlinear potential term 
when the roles of time and distance are exchanged \cite[p. 50]{Agrawal}.
The NLS equation has no closed-form solution for general inputs.
Closed-form solutions to the NLS equation exist when $\beta_2 = 0$ and/or $\gamma = 0$.
Solutions to the NLS equation with $\beta_2 \neq 0$ and $\gamma \neq 0$ exist only for 
special input waves called solitons.

There are other interesting cases where closed-form solutions exist.
Consider a three-channel WDM system in which
three optical fields at different center frequencies $\omega_1$, $\omega_2$ and $\omega_3$ are launched into the fiber, 
i.e., the input field is\footnote{
We ignore the frequency dependence of the modal distribution. 
The difference is small and can be neglected in practice \cite[7.1.2]{Agrawal}.
}
\begin{align}
A(0, t) = \sum_{k=1}^3 A_k(0, t) e^{-i(\omega_k-\omega_0) t}.
\end{align}
Suppose $A(z, t)$ takes the form 
\begin{align}
A(z, t) = \sum_{k=1}^3 A_k(z, t) e^{i\hat{\beta}(\omega_k)z} e^{-i(\omega_k-\omega_0) t}
\end{align}
where $\hat{\beta}(\omega) = \beta_1 (\omega-\omega_0) + (\beta_2/2) (\omega-\omega_0)^2$.
By substituting into (\ref{eq:propagation_eq}), we have\footnote{
We do not use a retarded frame because it does not lead to much simplification.
This is because it is not possible to eliminate simultaneously 
all the terms of first-order derivatives with respect to time.}
\ifdefined\twocolumnmode{
\begin{align}
	&
	\sum_{k=1}^3 e^{i\hat{\beta}(\omega_k)z} e^{-i(\omega_k-\omega_0)t}
	\Bigg[
	i \frac{\partial A_k}{\partial z} + i \beta_{1k} \frac{\partial A_k}{\partial t} 
	- \frac{\beta_{2k}}{2} \frac{\partial^2 A_k}{\partial t^2}
	\nonumber\\&\qquad
	+ \gamma_k (|A_k|^2 + 2 \sum_{k^\prime \neq k} |A_{k^\prime}|^2 ) A_k
	\Bigg]
	+ \textsf{F}
	= 0
	\label{eq:nls_wdm}
\end{align}
}\else{
\begin{align}
	\sum_{k=1}^3 e^{i\hat{\beta}(\omega_k)z} e^{-i(\omega_k-\omega_0)t}
	\Bigg[
	i \frac{\partial A_k}{\partial z} + i \beta_{1k} \frac{\partial A_k}{\partial t} 
	- \frac{\beta_{2k}}{2} \frac{\partial^2 A_k}{\partial t^2}
	+ \gamma_k (|A_k|^2 + 2 \sum_{k^\prime \neq k} |A_{k^\prime}|^2 ) A_k
	\Bigg]
	+ \textsf{F}
	= 0
	\label{eq:nls_wdm}
\end{align}
}\fi
where $\beta_{1k} = \beta_1 + \beta_2 (\omega_k-\omega_0)$, $\beta_{2k} = \beta_2$, $\gamma_{k} = \gamma$ and
\ifdefined\twocolumnmode{
\begin{align}
\textsf{F} &=
\sum_{k_1 \neq k_2, k_3 \neq k_2}
e^{i \hat{\beta}(\omega_{k_1}-\omega_{k_2}+\omega_{k_3}) z} 
e^{-i(\omega_{k_1}-\omega_{k_2}+\omega_{k_3}-\omega_0) t}
\times \nonumber \\& \qquad\qquad\qquad
\Bigg[
A_{k_1} A_{k_2}^* A_{k_3} 
e^{i \Delta\hat{\beta}(\omega_{k_1},\omega_{k_2},\omega_{k_3}) z} 
\Bigg]
\label{eq:fwm}
\end{align}
}\else{
\begin{align}
\textsf{F} =
\sum_{k_1 \neq k_2, k_3 \neq k_2}
e^{i \hat{\beta}(\omega_{k_1}-\omega_{k_2}+\omega_{k_3}) z} 
e^{-i(\omega_{k_1}-\omega_{k_2}+\omega_{k_3}-\omega_0) t}
\Bigg[
A_{k_1} A_{k_2}^* A_{k_3} 
e^{i \Delta\hat{\beta}(\omega_{k_1},\omega_{k_2},\omega_{k_3}) z} 
\Bigg]
\label{eq:fwm}
\end{align}
}\fi
with $\Delta\hat{\beta}(\omega_{k_1},\omega_{k_2},\omega_{k_3})$ defined as
\begin{align*}
\Delta\hat{\beta}
= \hat{\beta}(\omega_{k_1})-\hat{\beta}(\omega_{k_2})+\hat{\beta}(\omega_{k_3})
- \hat{\beta}(\omega_{k_1}-\omega_{k_2}+\omega_{k_3}).
\end{align*}
The summands in (7) are called FWM terms 
because they involve mixing, i.e., energy transfer, between four frequencies:
$\omega_{k_1}$, $\omega_{k_2}$, $\omega_{k_3}$ and $\omega_{k_1}-\omega_{k_2}+\omega_{k_3}$
for $k_1 \neq k_2$, $k_3 \neq k_2$.
We remark that the phase-matching condition $\Delta\hat{\beta} = 0$ should be satisfied for new frequency components to build up significantly via FWM, 
a condition not generally satisfied in practice when there is dispersion \cite[Sec. 7.1.1]{Agrawal}.

We ignore all FWM terms, i.e., we set $\textsf{F}=0$ in (\ref{eq:nls_wdm}).
Therefore, we have the coupled equations
\ifdefined\twocolumnmode{
\begin{align}
	&
	i \frac{\partial A_k}{\partial z} + i \beta_{1k} \frac{\partial A_k}{\partial t} 
	- \frac{\beta_{2k}}{2} \frac{\partial^2 A_k}{\partial t^2}
	\nonumber\\&\qquad~
	+ \gamma_k (|A_k|^2 + 2 \sum_{k^\prime \neq k} |A_{k^\prime}|^2 ) A_k
	= 0
	\label{eq:coupled_eq}
\end{align}
}\else{
\begin{align}
	i \frac{\partial A_k}{\partial z} + i \beta_{1k} \frac{\partial A_k}{\partial t} 
	- \frac{\beta_{2k}}{2} \frac{\partial^2 A_k}{\partial t^2}
	+ \gamma_k (|A_k|^2 + 2 \sum_{k^\prime \neq k} |A_{k^\prime}|^2 ) A_k
	= 0
	\label{eq:coupled_eq}
\end{align}
}\fi
for $k=1,2,3$,
assuming that the three optical fields do not overlap in the frequency domain.
There are two nonlinear terms in (\ref{eq:coupled_eq}): the first is referred to as SPM and the second term is referred to as XPM.
The term \emph{phase modulation} is because, in absence of GVD,
Kerr nonlinearity leaves the pulse shape unchanged but causes an intensity-dependent phase shift due to the signal \emph{itself} (SPM)
and \emph{co-propagating} signals (XPM).
XPM is an important impairment in optical networks using WDM, see~\cite{JLT2010}.
There are also two terms in (\ref{eq:coupled_eq}) due to dispersion. 
The first term with $\beta_{1k}$ captures the mismatch in group velocity \emph{between} channels while
the second term with $\beta_{2k}$ captures the GVD \emph{within} the bandwidth of a channel.

Similar to the NLS equation (\ref{eq:nls_eq}), 
the coupled equations in (\ref{eq:coupled_eq}) have no closed-form solution for a general input.
Therefore, we make a further simplification by ignoring the GVD within a channel, i.e.,
we set $\beta_{2k} = 0$ for $k=1,2,3$. 
This simplification gives the closed-form solution:\footnote{
The solution follows from steps
similar to the steps outlined in Sec. 1.8.10 of \cite{GhozlanPHD} for two coupled equations.
}
\begin{align}
	A_k(L,t) &= A_{k}(0,t-\beta_{k1} L) \exp\left( i \phi_{k}(L,t-\beta_{k1} L) 	\right)
	\label{eq:Ak_solution}
\end{align}
where $k\in\{1,2,3\}$, 
$L$ is the length of a single span of fiber and
the time-dependent nonlinear phase shifts $\phi_{k}(L,t)$ are 
\ifdefined\twocolumnmode{
\begin{align}
	\phi_{1}(L,t) &=
	\int_0^L \gamma_1 \big( |A_{1}(0,t)|^2  
								 +  2 |A_{2}(0,t+d_{12} \zeta)|^2 
								\nonumber\\&\qquad\qquad
								 +	2 |A_{3}(0,t+d_{13} \zeta)|^2 \big) d\zeta 
	\label{eq:phi1_solution}\\
	\phi_{2}(L,t) &=
	\int_0^L \gamma_2 \big( |A_{2}(0,t)|^2
								 + 	2 |A_{1}(0,t+d_{21} \zeta)|^2 
								\nonumber\\&\qquad\qquad
								 + 	2 |A_{3}(0,t+d_{23} \zeta)|^2 \big) d\zeta 
	\label{eq:phi2_solution}\\
	\phi_{3}(L,t) &=
	\int_0^L \gamma_3 \big( |A_{3}(0,t)|^2
								 + 	2 |A_{1}(0,t+d_{31} \zeta)|^2 
								\nonumber\\&\qquad\qquad
								 + 	2 |A_{2}(0,t+d_{32} \zeta)|^2 \big) d\zeta 								 			 
	\label{eq:phi3_solution}
\end{align}
}\else{
\begin{align}
	\phi_{1}(L,t) &=
	\int_0^L \gamma_1 \big( |A_{1}(0,t)|^2  
								 +  2 |A_{2}(0,t+d_{12} \zeta)|^2 
								 +	2 |A_{3}(0,t+d_{13} \zeta)|^2 \big) d\zeta 
	\label{eq:phi1_solution}\\
	\phi_{2}(L,t) &=
	\int_0^L \gamma_2 \big( |A_{2}(0,t)|^2
								 + 	2 |A_{1}(0,t+d_{21} \zeta)|^2 
								 + 	2 |A_{3}(0,t+d_{23} \zeta)|^2 \big) d\zeta 
	\label{eq:phi2_solution}\\
	\phi_{3}(L,t) &=
	\int_0^L \gamma_3 \big( |A_{3}(0,t)|^2
								 + 	2 |A_{1}(0,t+d_{31} \zeta)|^2 
								 + 	2 |A_{2}(0,t+d_{32} \zeta)|^2 \big) d\zeta 								 			 
	\label{eq:phi3_solution}
\end{align}
}\fi
where 
\begin{align}
	d_{kj} \stackrel{\Delta}{=} \beta_{1k}-\beta_{1j}
\end{align}
is a measure of GVM between channel $k$ and channel $j$.
We remark that our model captures GVD of the \emph{overall} signal, 
but only through GVM, 
namely through $d_{kj} = \beta_2 (\omega_k-\omega_j)$.

As we pointed out earlier,
we assume lumped amplification at the receiver,
i.e., the signal observed at receiver $k$ 
after removing the constant phase shift $e^{i\hat{\beta}(\omega_k)L}$
is
\begin{align}
	r_k(t) = A_k(L,t) + z_k(t)
	\label{eq:rx_k}
\end{align}
where $z_k(t)$ is circularly-symmetric white Gaussian noise with 
$\mathbb{E}[z_k(t)] = 0$, and 
$\mathbb{E}[z_k(t) z_k^*(t+\tau)] = N \delta(\tau)$.
The processes $z_1(t)$, $z_2(t)$ and $z_3(t)$ are statistically independent.

Suppose the transmitted signals are linearly-modulated, i.e.,
the signal sent by transmitter $k$ is
\begin{align}
	A_k(0,t) = \sum_{m=0}^{n-1} x_k[m] \ p(t-m T_s)
\end{align}
where $(x_k[0],\ldots,x_k[n-1])$ is the codeword of transmitter $k$
and $p(t)$ is a pulse such that 
$p(t)=0$ for $t \notin [0,T_s]$ and
\begin{align}
	\int_0^{T_s} |p(\lambda)|^2 d\lambda = E_s.
	\label{eq:nonzerogvm-pulse-energy}
\end{align}

We analyze the setup above in two steps.
\begin{enumerate}
\item 
We start with a simplified version in Sec. \ref{sec:zero-gvm}
where GVM is neglected, i.e., $\beta_{1k}$ is taken to be the same for all $k$ 
so that $d_{kj}=0$ for all $k$,$j$.
For simplicity, we use a rectangular pulse $p(t)$
and consider mainly two WDM channels.
\item
We use the insights gained from Sec. \ref{sec:zero-gvm} to
address the three-user model with GVM and general (time-limited) pulses
in Sec. \ref{sec:nonzero-gvm}.
\end{enumerate}
Table \ref{tab:dt-models} summarizes the assumptions.
\begin{table}
\begin{center}
\begin{tabular}{|lll|l|}
\hline
 GVM 	& Pulse 	& Users 	& Sec.\\
 \hline
 No  	& rectangular 	& two 		& \ref{sec:dt-model-2user}	\\
 No		& rectangular 	& three 	& \ref{sec:dt-model-kuser}	\\
 Yes	& general time-limited 	& three & \ref{sec:dt_model}		\\
 \hline
\end{tabular}	
\end{center}
\caption{Assumptions for Discrete-Time Models}
\label{tab:dt-models}
\end{table}

\section{Zero Group Velocity Mismatch}
\label{sec:zero-gvm}
Consider zero GVM 
with a rectangular pulse (in the time domain) and $E_s=1$.
We present a discrete-time two-user channel model in Sec. \ref{sec:dt-model-2user}, 
and we show that a pre-log 1/2 is achievable for two users by using either
pure amplitude modulation (Sec. \ref{sec:zero-gvm-am}) or
pure phase modulation (Sec. \ref{sec:onering}). 
We introduce interference focusing in Sec. \ref{sec:multiring}
and show that it achieves a pre-log 1 for both users, and therefore no degrees of freedom are lost.
An extension of the discrete-time model to three users is presented in Sec. \ref{sec:dt-model-kuser}.

\subsection{Discrete-Time Two-User Model}
\label{sec:dt-model-2user}
Consider a two-user system in which
receiver $k$, $k=1,2$, obtains $(Y_{k}[0], Y_{k}[1], \cdots, Y_{k}[n-1])$ by matched filtering the received signal $r_k(t)$ and sampling the filter output at the symbol rate.
Equations (\ref{eq:Ak_solution}--\ref{eq:phi2_solution}) and (\ref{eq:rx_k}),
with $\beta_{11} = \beta_{12}$ and $A_3(0,t)=0$,
imply that the channel is memoryless. Hence, we drop the time indices and write the input-output relationships as
\begin{align}
	Y_1 & = X_1 \exp\left({i h_{11} |X_1|^2+ i h_{12} |X_2|^2}\right) + Z_1 \label{eq:channel2a} \\
	Y_2 & = X_2 \exp\left({i h_{21} |X_1|^2+ i h_{22} |X_2|^2}\right) + Z_2 \label{eq:channel2b}
\end{align}
where $Z_{k}$ is circularly-symmetric complex Gaussian noise with variance $N$.
The noise random variables at the receivers are independent.
The term $\exp(i h_{kk} |X_{k}|^2)$ models SPM and the term $\exp(i h_{k\ell} |X_{\ell}|^2)$, $k \neq \ell$, models XPM. We regard the $h_{k\ell}$ as \textit{channel coefficients} that are time invariant. These coefficients are known at the transmitters as well as the receivers.
We use symmetric power constraints
\begin{align} \label{eq:power-constraint-kuser}
	\mathbb{E}\left[ |X_{k}|^2 \right] \le P, \quad k=1,2
\end{align}
but the results below generalize to asymmetric powers.

A \textit{scheme} is a collection $\{ (\mathcal{C}_1(P,N),\mathcal{C}_2(P,N) )\}$ of pairs of codes
indexed by $(P,N)$, such that user $k$ uses the code $\mathcal{C}_k(P,N)$ 
that satisfies the power constraint and achieves an information rate $R_k(P,N)$ where $k=1,2$.
We distinguish between two limiting cases:
1) fixed noise with growing powers and 
2) fixed powers with vanishing noise.

\begin{definition} \label{def:highp-prelog}
The \textit{high-power} pre-log pair $(\overline{r}_1, \overline{r}_2)$ is achieved by a scheme 
if the rates satisfy
\begin{equation}
\overline{r}_k(N) = \lim_{P \rightarrow \infty} \frac{R_k(P,N)}{\log(P/N)}
\text{ for } k=1,2.
\label{eq:high-power_prelog_userk_def}
\end{equation}

\end{definition}
\begin{definition} \label{def:lowp-prelog}
The \textit{low-noise} pre-log pair $(\underline{r}_1, \underline{r}_2)$ is achieved by a scheme 
if the rates satisfy
\begin{equation}
\underline{r}_k(P) = \lim_{N \rightarrow 0} \frac{R_k(P,N)}{\log(P/N)}
\text{ for } k=1,2.
\label{eq:low-noise_prelog_userk_def}
\end{equation}
\end{definition}

The (high-power or low-noise) pre-log pair $(1/2,1/2)$ can be achieved if both users use amplitude modulation only or phase modulation only, 
as shown in Sec. \ref{sec:zero-gvm-am} and Sec. \ref{sec:onering}, respectively. 
We show in Sec. \ref{sec:multiring} that the high-power pre-log pair $(1,1)$ can be achieved 
through \textit{interference focusing}.

\subsection{Amplitude Modulation}
\label{sec:zero-gvm-am}
First, we introduce a result by Lapidoth \cite[Sec. IV]{LapidothPhaseNoise2002}.
\begin{lemma}
Let $Y=X+Z$ where $Z$ is a circularly-symmetric complex Gaussian random variable with mean $0$ and variance $N$.
Define $S \equiv |X|^2/P$. Suppose $S$ is distributed as
\begin{align}
 p_S(s) = \frac{e^{-s/2}}{\sqrt{2 \pi s}}, \quad s \geq 0.
 \label{eq:gamma_dist_1dof}
\end{align}
In other words, $|X|^2$ follows a Gamma distribution (or a Chi-squared distribution) with \emph{one} degree of freedom and has mean $P$.
Then we have
\begin{align}
	I(|X|^2;|Y|^2) \geq \frac{1}{2} \log\left(\frac{P}{2 N} \right) + o(1)
\label{eq:awgn-amplitude-rate-lowerbound}
\end{align}
where $o(1)$ tends to zero as $P/N$ tends to infinity.
\hfill $\blacksquare$
\end{lemma}

If $|X_1|^2/P$ and $|X_2|^2/P$ are distributed according to $p_S$ in (\ref{eq:gamma_dist_1dof}), 
then we have for $k=1,2$
\begin{align}
	I(X_k;Y_k)
	\geq I(|X_k|;|Y_k|)  
  = I(|X_k|^2;|Y_k|^2)
\end{align}
and it follows from (\ref{eq:awgn-amplitude-rate-lowerbound}) that
\begin{align}
	I(X_k;Y_k)
	&\ge \frac{1}{2} \log\left(\frac{P}{2 N} \right) + o(1).
\end{align}
It follows that the high-power and low-noise pre-log pair $(1/2,1/2)$ can be achieved when both users use amplitude modulation.

\subsection{Phase Modulation}
\label{sec:onering}
Suppose the transmitters use phase modulation with $|X_1| = \sqrt{P}$ and $|X_2| = \sqrt{P}$.
The input-output equations (\ref{eq:channel2a})--(\ref{eq:channel2b}) become
\begin{align}
	Y_1 & = X_1 ~ e^{i h_{11} P + i h_{12} P } + Z_1  \label{eq:Y1_ring}\\
	Y_2 & = X_2 ~ e^{i h_{21} P + i h_{22} P } + Z_2. \label{eq:Y2_ring}
\end{align}
Therefore, each receiver sees a constant phase shift which allows us to treat each transmitter-receiver pair separately as an AWGN channel.
We next show that the pre-log pair $(r_1,r_2)=(1/2,1/2)$ can be achieved by using phase modulation only.

\begin{theorem}[One-Ring Modulation]
\label{theorem:one-ring}
Fix $P>0$. Let $Y=X+Z$ where $Z$ is a circularly-symmetric complex Gaussian random variable with mean $0$ and variance $N$,
and $X = \sqrt{P} e^{i \Phi_X}$ where $\Phi_X$ is a real random variable uniformly distributed on $[0,2\pi)$.
Then we have
\begin{align}
	I(X;Y)
	&\ge \frac{1}{2} \log\left(\frac{2 P}{N} \right) - 1 ~ (\text{nats}).
	\label{eq:one_ring_lb}
\end{align}
\end{theorem}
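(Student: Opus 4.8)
The plan is to reduce the channel to a phase-detection channel and then bound the entropy of the detected phase. Since $|X|=\sqrt{P}$ is deterministic, all of the information carried by $X$ resides in its phase, so $I(X;Y)=I(\Phi_X;Y)$. Applying the data-processing corollary from Sec.~\ref{sec:info-theory} with the map $g(Y)=\arg Y$ gives $I(X;Y)\ge I(\Phi_X;\arg Y)$, which is the quantity I would actually lower bound.

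Next I would split $I(\Phi_X;\arg Y)=h(\arg Y)-h(\arg Y\mid \Phi_X)$. Because $\Phi_X$ is uniform on $[0,2\pi)$, the input $X$ is circularly symmetric, and since $Z$ is circularly symmetric and independent of $X$, so is $Y=X+Z$; hence $\arg Y$ is uniform and $h(\arg Y)=\log 2\pi$. By the same rotational invariance, conditioned on $\Phi_X=\phi$ the variable $\arg Y$ has the law of $\phi+\Psi$ with $\Psi:=\arg(\sqrt{P}+Z)$, so $h(\arg Y\mid\Phi_X)=h(\Psi)$ and therefore $I(X;Y)\ge \log 2\pi-h(\Psi)$.

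It remains to upper bound $h(\Psi)$. Writing $Z=Z_r+iZ_i$ with $Z_r,Z_i$ independent and $\mathcal{N}(0,N/2)$, the variable $\Psi$ is symmetric about $0$, so $\mathbb{E}[\Psi]=0$, and I would invoke the standard fact that the Gaussian maximizes differential entropy for a fixed second moment to get $h(\Psi)\le \tfrac{1}{2}\log\big(2\pi e\,\mathbb{E}[\Psi^2]\big)$. To estimate $\mathbb{E}[\Psi^2]$ I would use the elementary inequality $\theta^2\le \tfrac{\pi^2}{2}(1-\cos\theta)$ valid for $\theta\in[-\pi,\pi]$, together with the identity $1-\cos\Psi = Z_i^2/\big(R\,(R+\sqrt{P}+Z_r)\big)$, where $R=|\sqrt{P}+Z|$. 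Conditioning on the high-probability event that the noise leaves $Y$ in the half-plane $\{\Re(\sqrt{P}+Z)\ge \tfrac{1}{2}\sqrt{P}\}$, the denominator is of order $P$ and contributes a term of order $N/P$, while the complementary event has probability exponentially small in $P/N$; this yields $\mathbb{E}[\Psi^2]=O(N/P)$ with an explicit constant.

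Substituting back gives $I(X;Y)\ge \log 2\pi-\tfrac{1}{2}\log\big(2\pi e\,\mathbb{E}[\Psi^2]\big)\ge \tfrac{1}{2}\log(2P/N)-1$. I expect the last step to be the main obstacle: the estimate of $\mathbb{E}[\Psi^2]$ must be handled carefully near the origin, where $\Psi$ swings through the full range $[-\pi,\pi]$ and the crude inequalities degrade, and the constant must be pinned down so that the clean value $-1$ holds uniformly in $P/N$. For small $P/N$ the right-hand side is negative and the bound is vacuous since $I(X;Y)\ge 0$, so only the moderate-to-large $P/N$ regime requires the quantitative control.
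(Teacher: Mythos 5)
Your reduction to the phase channel is correct and genuinely different from the paper's proof. You bound $I(X;Y)\ge \log(2\pi) - h(\Psi)$ with $\Psi=\arg(\sqrt{P}+Z)$ and then invoke the Gaussian maximum-entropy bound, whereas the paper works directly with $h(Y)$: it uses the exact density $p_Y(y)=\frac{1}{\pi N}e^{-(|y|^2+P)/N}I_0(2|y|\sqrt{P}/N)$, the Bessel bound $I_0(z)\le e^z/\sqrt{z}$ (its Appendix A), and a closed-form evaluation of $\mathbb{E}[\log|Y|]$ via the incomplete Gamma function (its Appendix B). Your route avoids all special functions, and your preliminary steps are sound: circular symmetry gives $h(\arg Y)=\log 2\pi$ and $h(\arg Y\mid\Phi_X)=h(\Psi)$, the identity $1-\cos\Psi = Z_i^2/\bigl(R(R+\sqrt{P}+Z_r)\bigr)$ is correct, and $\theta^2\le\frac{\pi^2}{2}(1-\cos\theta)$ on $[-\pi,\pi]$ is exactly Jordan's inequality at half-angle.

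The gap is precisely where you feared: the constant. Your chain gives $I(X;Y)\ge\frac12\log\bigl(2\pi P/(ecN)\bigr)$ whenever $\mathbb{E}[\Psi^2]\le cN/P$, so the target $\frac12\log(2P/N)-1$ requires $c\le\pi e\approx 8.54$, uniformly over the nonvacuous range $P/N\ge e^2/2\approx 3.69$. Your two-event split delivers, on the half-plane $\{\Re(\sqrt{P}+Z)\ge\sqrt{P}/2\}$, the contribution $\frac{\pi^2}{2}\cdot\frac{N}{P}$, but on the complementary event you can only use $\Psi^2\le\pi^2$ with probability at most $\frac12 e^{-P/(4N)}$, for a total of $c=\frac{\pi^2}{2}\bigl(1+(P/N)e^{-P/(4N)}\bigr)$. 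Since $xe^{-x/4}$ peaks at $x=4$, which lies inside the range you must cover, this gives $c\le\frac{\pi^2}{2}(1+4/e)\approx 12.2>\pi e$: numerically, at $P/N=4$ your lower bound on $I(X;Y)$ is negative while the theorem demands about $0.04$ nats, and the shortfall persists up to $P/N\approx 11$. The repair stays entirely within your framework but needs a three-event split exploiting that $|\Psi|\le\pi/2$ whenever $\Re(\sqrt{P}+Z)\ge 0$: on $\{\Re(\sqrt{P}+Z)\ge\sqrt{P}/2\}$ keep the quadratic bound; on $\{0\le\Re(\sqrt{P}+Z)<\sqrt{P}/2\}$ use $\Psi^2\le\pi^2/4$ with probability at most $\frac12 e^{-P/(4N)}$; on $\{\Re(\sqrt{P}+Z)<0\}$ use $\Psi^2\le\pi^2$ with the much smaller probability at most $\frac12 e^{-P/N}$. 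This yields, for all $P/N\ge e^2/2$,
\begin{align}
\frac{P}{N}\,\mathbb{E}[\Psi^2]
\;\le\; \frac{\pi^2}{2}+\frac{\pi^2}{8}\cdot\frac{4}{e}+\frac{\pi^2}{2}\cdot\frac{e^2}{2}\,e^{-e^2/2}
\;\approx\; 7.2 \;<\;\pi e,
\end{align}
so that $I(X;Y)\ge\frac12\log(2P/N)-0.92\ge\frac12\log(2P/N)-1$, closing the proof with the stated constant.
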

\begin{proof}
We have
\begin{align}
	I(X;Y) 
	&= \mathbb{E}[-\log p_Y(Y)] - \log(\pi e N)	 \label{eq:I_PhiX_Y}		
\end{align}
The pdf $p_Y$ of $Y$ can be shown to be \cite[p. 688]{JLT2010}
\begin{align}
	p_Y(y) 
	&=  \frac{1}{\pi N} e^{-(y_A^2+P)/N} I_0\left(\frac{2 y_A \sqrt{P} }{N}\right)
\end{align}
where $I_0(\cdot)$ is the modified Bessel function of the first kind of order zero
and $Y_A = |Y|$.
Therefore, we have
\begin{align}
	h(Y) 
	&=  \mathbb{E}\left[ 
	-\log\left(\frac{1}{\pi N} e^{-(Y_A^2+P)/N} I_0\left(\frac{2 Y_A \sqrt{P} }{N}\right) \right)
	\right] \nonumber \\
	&\stackrel{(a)}{\geq}
	\mathbb{E}\left[ 
	-\log\left(\frac{1}{\pi N} \frac{e^{-(Y_A - \sqrt{P})^2 /N}}{\sqrt{2 Y_A \sqrt{P} /N}} \right)
	\right] \nonumber \\
	&\stackrel{(b)}{\geq}
	\mathbb{E}\left[  \log\left({\pi N} {\sqrt{2 Y_A \sqrt{P}/N}} \right) \right] \nonumber \\
	& =
	\frac{1}{4} \log\left(\frac{2P}{N}\right)
	+
	\log\left({\pi N}\right)
	+
	\frac{1}{2} \mathbb{E} \left[ \log\left(Y_A \sqrt{\frac{2}{N}}\right) \right]
	\label{eq:entropy_y_lowerbound}
\end{align}
where $(a)$ follows by using Lemma \ref{lemma:besseli_lb} in Appendix A
and ($b$) holds because $(Y_A-\sqrt{P})^2 \geq 0$.
The pdf of $Y_A$ is given by
\begin{align}
  p_{Y_A}(y_A)
	&= \int_{-\pi}^{\pi} p_{Y}(y) y_A d\phi_y \nonumber\\
	&= \frac{2 y_A}{N} e^{-(y_A^2+P)/N} 
	I_0\left(\frac{2 y_A \sqrt{P}}{N}\right) .
	\label{eq:pya}
\end{align}
The last expectation in \eqref{eq:entropy_y_lowerbound} is
\begin{align}
	\mathbb{E} \left[ \log\left(Y_A \sqrt{\frac{2}{N}}\right) \right]
	&= \int_{0}^{\infty} p_{Y_A}(y_A)
	\log\left(y_A \sqrt{\frac{2}{N}}\right) \, dy_A
	\nonumber \\
	&\stackrel{(a)}{=} \int_{0}^{\infty} z e^{-(z^2 + \nu^2)/2}
	I_0\left( z \nu \right) \log(z) \, dz \nonumber \\
	&\stackrel{(b)}{=} \frac{1}{2} \left[ \Gamma\left(0,\frac{P}{N}\right) + \log\left(\frac{2P}{N}\right) \right] 
	\nonumber \\
	&\stackrel{(c)}{\geq} \frac{1}{2} \log\left(\frac{2P}{N}\right)
	\label{eq:elogya_lowerbound}
\end{align}
where $\Gamma(a,x)$ is the upper incomplete Gamma function, see (\ref{eq:upper-incomplete-gamma}) below.
Step ($a$) follows by setting $\nu^2=2P/N$ and $z = y_A \sqrt{2/N}$,
($b$) follows from Lemma \ref{lemma:integral-lemma3} in Appendix B and
($c$) holds because $\Gamma\left(0,x\right) \ge 0$ for $x\ge0$.\footnote{Note that $\lim_{x \rightarrow \infty} \Gamma\left(0,x\right) = 0$.} 
Combining (\ref{eq:I_PhiX_Y}), (\ref{eq:entropy_y_lowerbound}) and (\ref{eq:elogya_lowerbound}) 
concludes the proof.
\end{proof}

Now, suppose that $X_k = \sqrt{P} e^{i \Phi_{X,k}}$ for $k=1,2$ where 
$\Phi_{X,1}$ and $\Phi_{X,2}$ are statistically independent and uniformly distributed on $[0,2\pi)$.
It follows from (\ref{eq:Y1_ring}), (\ref{eq:Y2_ring}) and Theorem \ref{theorem:one-ring} that
the high-power and low-noise pre-log 
pair $(1/2,1/2)$ can be achieved  when both users use phase modulation.

\subsection{Interference Focusing}
\label{sec:multiring}
We propose an \textit{interference focusing} technique in which the transmitters \textit{focus} their phase interference on one point by constraining their transmitted signals to satisfy
\begin{align}
	h_{21} |X_1|^2 & = 2 \pi \tilde{n}_1, ~ \tilde{n}_1=1,2,3,\ldots \label{eq:ring-constraint-1} \\
	h_{12} |X_2|^2 & = 2 \pi \tilde{n}_2, ~ \tilde{n}_2=1,2,3,\ldots \label{eq:ring-constraint-2} 
\end{align}
In other words, the transmitters use \textit{multi-ring modulation} with specified spacings between the rings.\footnote{Multi-ring modulation was used in~\cite{JLT2010,Essiambre2008PRL,Essiambre2009} for symmetry and computational reasons. We here find that it is useful for improving rate.} We thereby remove XPM interference and \eqref{eq:channel2a}-\eqref{eq:channel2b} reduce to
\begin{align}
	Y_k = X_k e^{i h_{kk} |X_k|^2} + Z_k, \quad k=1,2.
\end{align}
This channel is effectively an AWGN channel since $h_{kk}$ is known by receiver $k$ and the SPM phase shift is determined by the desired signal $X_k$.
We will show that the high-power pre-log pair $(1,1)$ is achieved under the constraints \eqref{eq:ring-constraint-1}-\eqref{eq:ring-constraint-2}.

\begin{theorem}[Multi-Ring Modulation]
\label{theorem:multi-ring}
Let $Y=X+Z$ where $Z$ is a circularly-symmetric complex Gaussian random variable with mean $0$ and variance $N$.
Suppose $\mathbb{E}[|X|^2] \leq P$ and $|X|^2$ is allowed to take on values that are multiples of a fixed real number $\hat{p}>0$,
i.e., $|X|^2 = m \hat{p}$ where $m \in \mathbb{N}$. Then there exists a probability distribution $p_X$ of $X$ such that
\begin{align}
	\lim_{P \rightarrow \infty} \frac{I(X;Y)}{\log(\snr)}
	&\ge 1.
	\label{eq:multi_ring_lb}
\end{align}
\end{theorem}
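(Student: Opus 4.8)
The plan is to exploit that the underlying channel $Y=X+Z$ is an ordinary complex AWGN channel, whose capacity $\log(1+\snr)$ already has pre-log $1$; the only obstruction is the ring constraint $|X|^2\in\{0,\hat p,2\hat p,\dots\}$, which restricts the \emph{amplitude} but leaves the \emph{phase} completely free. I would therefore carry information on amplitude and phase separately and show that each contributes pre-log $1/2$ under a single input distribution. Writing $X=A\,e^{i\Phi}$ with $A=|X|$, the chain rule together with the data-processing inequality $I(A;Y)\ge I(A;|Y|)$ gives
\begin{align}
  I(X;Y)=I(A;Y)+I(\Phi;Y\,|\,A)\ge I(A;|Y|)+I(\Phi;Y\,|\,A).
\end{align}
It then suffices to make each of the two terms grow like $\tfrac12\log\snr$.

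For the phase term I would take $\Phi$ uniform on $[0,2\pi)$ and independent of the discrete amplitude $A$. Conditioned on $A=a$, the channel is exactly the one-ring channel at power $a^2$, so Theorem~\ref{theorem:one-ring} yields $I(\Phi;Y\,|\,A=a)=I(X;Y\,|\,A=a)\ge\tfrac12\log(2a^2/N)-1$. Averaging over $A$ gives
\begin{align}
  I(\Phi;Y\,|\,A)\ge \tfrac12\,\mathbb{E}\!\left[\log|X|^2\right]+\tfrac12\log(2/N)-1,
\end{align}
which has pre-log $1/2$ provided the amplitude distribution is spread out enough that $\mathbb{E}[\log|X|^2]\sim\log P$. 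This term needs no new ideas beyond Theorem~\ref{theorem:one-ring}.

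The amplitude term $I(A;|Y|)$ is where the difficulty lies, and I expect it to be the main obstacle. The amplitude lemma of Sec.~\ref{sec:zero-gvm-am} delivers pre-log $1/2$ for a \emph{continuous} Gamma-distributed $|X|^2$, but here $|X|^2$ is forced onto the discrete grid $\{m\hat p\}$, so that lemma cannot be invoked directly. The observation that rescues the argument is that $\hat p$ is \emph{fixed} while the relevant amplitude scale grows like $\sqrt P$: the admissible radii $\sqrt{m\hat p}$ have spacing $\sqrt{(m{+}1)\hat p}-\sqrt{m\hat p}\approx \hat p/(2\sqrt{m\hat p})$, which falls below $\sqrt N$ once the radius exceeds the fixed constant $\hat p/(2\sqrt N)$. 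Hence, outside an $O(1)$-sized inner region, the rings are asymptotically dense on the $\sqrt N$ scale. I would use this in one of two equivalent ways: either (i) discretize the Gamma input of mean $P$ by rounding to the nearest admissible ring and argue that the induced change in $I(A;|Y|)$ is $O(1)$, or (ii) select $K\approx\sqrt{\snr}$ rings with radial separation $\gtrsim\sqrt N$ lying below $\sqrt P$, take $A$ uniform over them, and bound the conditional entropy $H(A\,|\,|Y|)$ by a constant using the rapid Gaussian decay, since confusions concentrate on neighboring rings and the error offset has entropy independent of $K$. Either route gives $I(A;|Y|)\ge\tfrac12\log\snr-O(1)$.

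Combining the two halves yields $I(X;Y)\ge \log\snr-O(1)$, hence $\lim_{P\to\infty} I(X;Y)/\log\snr\ge 1$. It remains only to check that the chosen product distribution (discrete $A$ spread up to order $\sqrt P$, phase uniform and independent) satisfies $\mathbb{E}[|X|^2]\le P$; this is immediate because the squared amplitude is supported below order $P$, and one can rescale the top ring to enforce the constraint exactly. The delicate point throughout is controlling the effect of the discreteness of $\{m\hat p\}$ in the amplitude term, which is precisely what the density of the rings at large radius renders harmless.
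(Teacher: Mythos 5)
Your proposal is correct, and its skeleton coincides with the paper's: the same split $I(X;Y)=I(A;Y)+I(\Phi;Y\,|\,A)$, the same use of Theorem \ref{theorem:one-ring} ring-by-ring for the phase term, and the same family of constellations (radii uniformly spaced in amplitude, uniform occupation, uniform independent phase). The genuine difference is in the amplitude term. The paper takes power levels $\Pu_j=a j^2\hat p$ with $a$ scaling as $N\log(\snr)$, so the radial spacing grows like $\sqrt{N\log(\snr)}$, the minimum-distance error probability of Lemma \ref{lemma:min-distance-estimator} vanishes, and plain Fano gives $H(X_A|Y)\rightarrow 0$ at the price of only $J\sim\sqrt{(\snr)/\log(\snr)}$ rings (a harmless $\tfrac12\log\log(\snr)$ loss). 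Your route (ii) instead keeps the spacing at a fixed multiple of $\sqrt N$, packs $K\sim\sqrt{\snr}$ rings, and bounds $H(A\,\big|\,|Y|)$ by a constant via the entropy of the index offset $A-\hat A$, whose distribution has Gaussian tails on a fixed scale. This refinement is essential to your parameter choice: with constant spacing the per-symbol error probability is a constant, so Fano's bound $H(P_e)+P_e\log(K-1)$ grows like a constant times $\log(\snr)$ and would destroy the pre-log; the offset-entropy argument (using $H(A|\hat A)=H(A-\hat A\,|\,\hat A)\le H(A-\hat A)$ and summability of $x\log(1/x)$ over the Gaussian tail) is what rescues it, and you correctly anticipated this. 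What each approach buys: the paper's scaling makes the converse-side bookkeeping trivial (vanishing error, one-line Fano) while yours yields the sharper estimate $I(X;Y)\ge\log(\snr)-O(1)$ with no $\log\log$ loss and no power-dependent spacing. One caveat: your alternative route (i), rounding Lapidoth's Gamma input to the grid and claiming the change in $I(A;|Y|)$ is $O(1)$, is not justified as stated --- perturbing an input distribution does not obviously perturb mutual information by $O(1)$ --- but since route (ii) is self-contained, the proof stands on it alone.
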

\begin{proof}
Define $X_A=|X|$ and $\Phi_X = \arg{X}$.
Consider multi-ring modulation, i.e.,
$X_A$ and $\Phi_X$ are statistically independent,
$\Phi_X$ is uniformly distributed on the interval $[0,2\pi)$
and $X_A \in \{\sqrt{\Pu_j}:j=1,\ldots,J\}$
where $J$ is the number of rings.
We choose the rings to be spaced uniformly in amplitude as
\begin{align} \label{eq:power-levels}
	\Pu_j = a j^2 \, \hat{p}
\end{align}
where $a$ is a positive integer. 
We further use a uniform frequency of occupation of rings with $P_{X_A}(\sqrt{\Pu_j}) = 1/J$, $j=1,2,\ldots,J$. 
The power constraint is therefore
\begin{align} \label{eq:pwr_constraint_uniform}
	\frac{1}{J} \sum_{j=1}^J a j^2 \, \hat{p} \le P .
\end{align}
For (\ref{eq:pwr_constraint_uniform}), we compute
\begin{align}
	\frac{1}{J} \sum_{j=1}^{J} a \hat{p} \, j^2 = a \hat{p} \frac{(J+1)(2J+1)}{6}
\end{align}
and to satisfy the power constraint we choose\footnote{The solution for $J$ should be positive and rounded down to the nearest integer but we ignore these issues for notational simplicity.}
\begin{align}
	J = \frac{-3  + \sqrt{1 + 48 P/(a \hat{p})}}{4}.
\end{align}
Moreover, we choose $a = \lfloor \max\{1,N\log(\snr)\} \rfloor$.
We remark that we say $f(x)$ scales as $g(x)$ if
\begin{align*}
\lim_{x \rightarrow \infty} \frac{f(x)}{g(x)} = \text{constant}.
\end{align*}
For example, $J$ scales as $\sqrt{(\snr)/\log(\snr)}$ when $a$ is chosen as above, i.e., we have
\begin{align*}
\lim_{P \rightarrow \infty} \frac{J}{\sqrt{(\snr)/\log(\snr)}} = \text{constant}.
\end{align*}

We have
\begin{align}
	I(X;Y) 
	&= I(X_A  \Phi_X;Y) \nonumber \\
	& = I(X_A;Y) + I(\Phi_X;Y|X_A) \label{eq:mult-ring-rate}
\end{align}
The term $I(X_A ; Y)$ can be viewed as the amplitude contribution while the term $I(\Phi_X ; Y | X_A)$ is the phase contribution.

\subsubsection{Phase Contribution}
We show that the phase modulation contributes at least 1/2 to the pre-log
when using multi-ring modulation.

\begin{lemma} \label{lemma:sum-bounds-integral}
For integers $a$ and $b$ with $a \leq b$, a non-decreasing function $f(x)$ in $x$ satisfies
\begin{align}
	\int_{a-1}^b f(x) dx \leq \sum_{i=a}^b f(i).
\end{align}
\hfill $\blacksquare$
\end{lemma}
We thus have
\begin{align}
	I(\Phi_X;Y|X_A) 
	& \stackrel{(a)}{=} 	 \sum_{j=1}^J \frac{1}{J} \, I(\Phi_X;Y|X_A=\sqrt{\Pu_j}) \nonumber \\
	& \stackrel{(b)}{\geq} \sum_{j=1}^J \frac{1}{J}  \, \frac{1}{2} \log\left( \frac{\Pu_j}{N} \right) - 1 \nonumber \\
	& \stackrel{(c)}{=} 	 \frac{1}{2 J} \sum_{j=1}^J \log\left(\frac{a j^2 \hat{p}}{N}\right) - 1 \nonumber \\
	& \stackrel{(d)}{\geq} \frac{1}{2 J} \int_{0}^J \log\left(\frac{a x^2 \hat{p}}{N}\right) dx - 1 \nonumber \\
	& \stackrel{(e)}{=}		 \frac{1}{2} \log\left( \frac{a J^2 \hat{p}}{N e^2} \right) - 1 
\end{align}
where
($a$) follows from the uniform occupation of rings,
($b$) follows from Theorem \ref{theorem:one-ring},
($c$) holds by choosing the rings according to \eqref{eq:power-levels},
($d$) follows from Lemma \ref{lemma:sum-bounds-integral} since the logarithm is an increasing function and
($e$) follows by using $\log(a x^2 \hat{p}/N) = \log(a \hat{p}/N) + 2 \log(x)$ and
\begin{align}
\int \log(x) dx = x \log\left({x/e}\right).
\end{align}
We can therefore write
\begin{align}
	\lim_{P \rightarrow \infty}  \frac{I(\Phi_X;Y|X_A)}{\log(\snr)} 
	&\ge \lim_{P \rightarrow \infty} \frac{\frac{1}{2} \log(a J^2 \hat{p}/N)}{\log(\snr)} 
	= \frac{1}{2} \label{eq:phase-prelog}
\end{align}
where \eqref{eq:phase-prelog} follows because $a$ scales as $N\log(\snr)$, $J^2$ scales as $(\snr)/\log(\snr)$,  and $\hat{p}$ is independent of $P$ and $N$.
The pre-log of the phase contribution is therefore at least $1/2$.

\subsubsection{Amplitude Contribution}
We show that amplitude modulation contributes $1/2$ to the pre-log.
We have
\begin{align}
	I(X_A;Y)  = H(X_A) - H(X_A|Y)
\end{align}
where $H(X_A) = \log(J)$. We showed previously that $J$ scales as $\sqrt{(\snr)/\log(\snr)}$ if $a$ scales as $N\log(\snr)$. We bound $H(X_A|Y)$ using Fano's inequality as
\begin{align}
	H(X_A|Y) &\le H(X_A|\hat{X}_A) \nonumber \\
	&\le H(P_e) + P_e \log(J-1)
	\label{eq:fano}	
\end{align}
where $\hat{X}_A$ is any estimate of $X_A$ given $Y$, $P_e=\Pr[\hat{X}_A \ne X_A]$ and $H(P_e)$ is the binary entropy function with a general logarithm base. Suppose we use the minimum distance estimator
\begin{align}
	\hat{X}_A = \arg \min_{x_A \in \mathcal{X}_A} |Y_A-x_A|
	\label{eq:min_distance_dec}
\end{align}
where $Y_A=|Y|$ and $\mathcal{X}_A = \{\sqrt{\Pu_j}:j=1,\ldots,J\}$. 
The probability of error $P_e$ is upper bounded by 
(see Lemma \ref{lemma:min-distance-estimator} in Appendix \ref{sec:appendix-min-distance})
\begin{align}
	P_e \leq \frac{2}{J} \sum_{j=2}^{J} \exp\left(-\frac{\Delta_j^2}{4}\right)
\end{align}
where $\Delta_j = (\sqrt{\Pu_j} - \sqrt{\Pu_{j-1}})/\sqrt{N}$.
For the power levels (\ref{eq:power-levels}), we have $\Delta_j = \sqrt{a \hat{p}/N}$ for all $j$, and hence
\begin{align} \label{eq:Pe-bound-2}
	P_e 
	\le \frac{2(J-1)}{J} \exp\left(-\frac{a \hat{p}}{4 N}\right) 
	\le       2          \exp\left(-\frac{a \hat{p}}{4 N}\right) .
\end{align}
We see from \eqref{eq:Pe-bound-2} that $\lim_{P \rightarrow \infty} P_e = 0$ if $a$ scales as $N\log(\snr)$
(recall that $J$ scales as $\sqrt{(\snr)/\log(\snr)}\,$). 
We thus have $\lim_{P \rightarrow \infty} H(X_A|Y) = 0$ by using (\ref{eq:fano}).
Consequently, we have
\begin{align} \label{eq:amplitude-prelog}
	\lim_{P \rightarrow \infty} \frac{I(X_A;Y)}{\log(\snr)} 
	= \lim_{P \rightarrow \infty} \frac{\log(J)}{\log(\snr)} 
	= \frac{1}{2}.
\end{align}
Finally, combining \eqref{eq:mult-ring-rate}, \eqref{eq:phase-prelog}, and \eqref{eq:amplitude-prelog} gives (\ref{eq:multi_ring_lb}).
\end{proof}

We conclude that interference focusing achieves the largest-possible high-power pre-log of 1. Each user can therefore exploit all the phase and amplitude degrees of freedom simultaneously.

\subsection{Discrete-Time Three-User Model}
\label{sec:dt-model-kuser}
Consider a WDM system with three users.
Receiver $k$ obtains $(Y_{k}[0], Y_{k}[1], \cdots, Y_{k}[n-1])$ by matched filtering the received signal $r_k(t)$ in (\ref{eq:rx_k}) and sampling the filter output at the symbol rate.
By setting $\beta_{11} = \beta_{12} = \beta_{13}$
in (\ref{eq:Ak_solution}--\ref{eq:phi3_solution}), 
we have the following memoryless channel model:
\begin{align}
	Y_{k} = X_{k} \exp\left( i \sum_{\ell=1}^3 h_{k\ell} |X_{\ell}|^2 \right) + Z_{k} 
	\label{eq:channel}
\end{align}
for $k=1,2,3$ where $Z_{k}$ is circularly-symmetric complex Gaussian noise with variance $N$.
All noise random variables at different receivers are statistically independent. The terms $\exp(i h_{kk} |X_{k}[j]|^2)$ model SPM and the terms $\exp(i h_{k\ell} |X_{\ell}[j]|^2)$, $\ell \neq k$, model XPM. The $h_{k\ell}$ are again \textit{channel coefficients} that are time invariant and are known at the transmitters as well as the receivers.
The power constraints are
\begin{align} \label{eq:power-constraint}
	\mathbb{E}\left[ |X_{k}|^2 \right] \le P, \quad k=1,2,3.
\end{align}

\subsubsection*{Interference Focusing}
We outline how to apply interference focusing to the three-user channel. Define the interference phase vector
\begin{align}
   \underline{\Psi} \stackrel{\Delta}{=} [\Psi_1,\Psi_2,\Psi_3]^T
\end{align}
where $\Psi_k = \sum_{\ell=1}^3 h_{k\ell} |X_{\ell}|^2$ and the instantaneous power vector
\begin{align}
  \underline{\Pi} \stackrel{\Delta}{=} \left[ |X_1|^2,|X_2|^2,|X_3|^2 \right]^T.
\end{align}
The relationship between the $\underline{\Psi}$ and $\underline{\Pi}$ in matrix form is
\begin{align}
	\underline{\Psi} = H_{SP} \, \underline{\Pi}  + H_{XP} \, \underline{\Pi} 
\end{align}
where $H_{SP}$ is a diagonal matrix that accounts for SPM and $H_{XP}$ is a zero-diagonal matrix that accounts for XPM.
For example, suppose the XPM matrix for a 3-user interference network is
\begin{align}
	H_{XP} = \left[
	\begin{array}{ccc}
	0       &      1/2     &      3/5     \\
	3/4     &      0       &      2/3     \\
	5/6     &      1/5     &      0   	
	\end{array}
	\right].
\end{align}
Suppose that each transmitter knows the channel coefficients between itself and all the receiving nodes. The transmitters can thus use power levels of the form
\begin{align}
   \underline{\Pi} & = 2 \pi \cdot \left[ \, \text{lcm}(4,6) m_1, \text{lcm}(2,5) m_2, \text{lcm}(5,3) m_3 \,\right] \nonumber \\
   & = 2 \pi \cdot \left[\,  12 m_1, 10 m_2, 15 m_3 \,\right] 
\end{align}
where $\text{lcm}(a,b)$ is the least common multiple of $a$ and $b$, and $m_1,m_2,m_3$ are positive integers. We thus have
\begin{align}
	H_{XP} \, \underline{\Pi} 
	=
	2 \pi
	\left[
	\begin{array}{ccc}
	0      &      5     &      9     \\
	9   	 &      0     &     10     \\
	10     &      2     &      0   	
	\end{array}
	\right]
	\left[
	\begin{array}{c}
	m_1	\\
	m_2	\\
	m_3     
	\end{array}
	\right]
\end{align}
which implies that the phase interference has been eliminated.

The above example combined with an analysis similar to Section \ref{sec:multiring} shows that interference focusing will give each user a pre-log of $1$ even for three-user interference networks. However, the XPM coefficients $h_{k\ell}$ must be \textit{rationals}. This result can be generalized to the $K$-user case. Modifying interference focusing for \textit{real-valued} XPM coefficients is an interesting problem. It is clear from the example that interference focusing does not require global channel state information.

\section{Non-Zero Group Velocity Mismatch}
\label{sec:nonzero-gvm}
We next consider non-zero GVM, i.e., $\beta_{13} \neq \beta_{12} \neq \beta_{11}$. 
Without loss of generality, suppose that $\beta_{13} > \beta_{12} > \beta_{11}$.
We now use a general time-limited pulse $p(t)$.

We start with the continuous-time model in Sec. \ref{sec:ct_model} below and derive a discrete-time model in Sec. \ref{sec:dt_model}.
We show that a pre-log 1/2 is achievable for all users by using pure amplitude modulation in Sec. \ref{sec:inner-bound}.
Next, we show that interference focusing achieves a pre-log of at least 1 for all users under certain conditions in Sec. \ref{sec:ifocus}.
Finally, we show in Sec. \ref{sec:outer-bound} that interference focusing achieves the maximum pre-log of 1 and, 
therefore, interference focusing is pre-log optimal.

\subsection{Continuous-Time Model}
\label{sec:ct_model}
The signal $r_k(t)$ in (\ref{eq:rx_k}) is fed to a bank of linear time-invariant (LTI) filters with impulse responses 
$\{h_{\idx}(\T)\}_{\idx \in \mathcal{F}_k}$, 
where $\mathcal{F}_k \subset \mathbb{Z}=\{\ldots,-1,0,1,\ldots\}$
and
\begin{align}
h_{\idx}(t) = p^*(-t) \exp(-i 2\pi {\idx} \K(-t))
\end{align}
where $K(t)$ is defined as
\begin{align}
	\K(t) = \frac{1}{E_s} \int_{0}^{t} |p(\lambda)|^2 d\lambda.
	\label{eq:rise_func}
\end{align}
The choice of the set $\mathcal{F}_k$ is specified in Sec. \ref{sec:ifocus}.
We show in Appendix \ref{sec:ortho_impulses} that the impulse responses of the filters are orthogonal, i.e., if $\idx_1 \neq \idx_2$, then we have
\begin{align}
	\int_{-\infty}^{\infty} h_{\idx_1}(\xi) h^*_{\idx_2}(\xi) d\xi
	= 0.
	\label{eq:ortho_diff}
\end{align}

The remaining analysis is similar for all receivers, hence we present the analysis for receiver 1 only.
The output of the filter with index ${\idx}$ is
\begin{align}
	y_{1,{\idx}}(\T) = r_1(\T) \star h_{\idx}(\T)
\end{align}
where $\star$ denotes convolution.
The noiseless part $\tilde{y}_{1,{\idx}}(\T)$ of the output of this filter is
\begin{align}
	\ifdefined\twocolumnmode & \fi
	\tilde{y}_{1,{\idx}}(\T+\beta_{11} L) 
	\ifdefined\twocolumnmode \nonumber\\ \fi 
	&\stackrel{\Delta}{=} 
		 A_1(L,\T+\beta_{11} L) \star h_{\idx}(\T)	\nonumber\\
	&= \left( A_1(0,\T) e^{i \phi_1(L,\T)} \right) \star h_{\idx}(\T)  \nonumber \\
	&= \left( \sum_{m=0}^{n-1} x_1[m] \ p(\T-m T_s) e^{i \phi_1(L,\T)} \right) \star h_{\idx}(\T) \nonumber \\
	&= \sum_{m=0}^{n-1} x_1[m] 
	\int p(\tau-m T_s) p^*(\tau-\T) 
	e^{i \phi_1(L,\tau) - i 2\pi \idx \K(\tau-\T)} d\tau
\end{align}
where the integral is over the whole real line.
Sampling the output signal $y_{1,{\idx}}(\T+\beta_{11} L)$ at the time instants $\T = j T_s$, for $j=1,2,\ldots,n$, yields
\begin{align}
	\ifdefined\twocolumnmode & \fi
	\tilde{y}_{1,{\idx}}(j T_s+\beta_{11} L) 
	\ifdefined\twocolumnmode \nonumber\\ \fi
	&= x_1[j] \ 
	\int_{j T_s}^{j T_s+T_s} 
	|p(\tau-j T_s)|^2 e^{i \phi_1(L,\tau) - i 2\pi \idx \K(\tau-j T_s)} d\tau
	\label{eq:y1nj_noiseless_integral}
\end{align}
where we used $p(t)=0$ for $t \notin [0,T_s]$.
We write $\phi_1(L,\tau)$ as
\begin{align}     
\phi_1(L,\tau) = \phi_{11}(L,\tau) + \phi_{12}(L,\tau) + \phi_{13}(L,\tau)
\label{eq:phi1_ct}
\end{align}
where we have defined 
\begin{align}
 \phi_{11}(L,t) &\stackrel{\Delta}{=}   \gamma_1  L \ |A_1(0,t)|^2
 \label{eq:phi11_definition} \\
 \phi_{12}(L,t) &\stackrel{\Delta}{=} 2 \gamma_1 L_{12} \
 \frac{1}{T_s} \int_{t-L d_{21}}^{t} |A_2(0,\lambda)|^2 d\lambda
 \label{eq:phi12_definition} \\
 \phi_{13}(L,t) &\stackrel{\Delta}{=} 2 \gamma_1 L_{13} \
 \frac{1}{T_s} \int_{t-L d_{31}}^{t} |A_3(0,\lambda)|^2 d\lambda
 \label{eq:phi13_definition} 
\end{align}
and where $L_{1k} \stackrel{\Delta}{=} T_s/|d_{1k}|$ for $k \neq 1$. 
Since $p(t)=0$ for $t \notin [0,T_s]$, we have
\begin{align}
 \phi_{12}(L,t) 
 &= \frac{2 \gamma_1 L_{12}}{T_s} \int_{t-L d_{21}}^{t} \sum_{m=0}^{n-1} |x_2[m]|^2 \ |p(\lambda-m T_s)|^2 d\lambda 	\nonumber\\
 &= \frac{2 \gamma_1 L_{12}}{T_s} \sum_{m=0}^{n-1} 
 			|x_2[m]|^2 \ \int_{t-L d_{21}}^{t} |p(\lambda-m T_s)|^2 d\lambda	\nonumber\\
 &= 2 \gamma_1 L_{12} \frac{E_s}{T_s} \sum_{m=0}^{n-1} |x_2[m]|^2 \ \psi(t-m T_s;d_{21})
 \label{eq:psi12_expansion}
\end{align}
where $\psi(t;d)$ is defined as
\begin{align}
 \psi(t;d) \stackrel{\Delta}{=} \frac{1}{E_s} \int_{t-L d}^{t} |p(\lambda)|^2 d\lambda.
 \label{eq:psi12m_definition}
\end{align}
If $L d \geq T_s$, then
\begin{align}
  \psi(t;d) =
  \left\{
  \begin{array}{ll}
   \K(t), 					& 0 \leq t < T_s	\\
   1, 							& T_s \leq t < L d	\\
	 \tilde{\K}(t;d),	& L d \leq t < L d + T_s \\
   0, 							& \text{otherwise}
  \end{array}
  \right.
\label{eq:psi12m_Ld_gt_Ts}
\end{align}
where $\K(t)$ is defined by (\ref{eq:rise_func}) and $\tilde{\K}(t;d)$ is given by
\begin{align}
	\tilde{\K}(t;d) 
	&= \frac{1}{E_s} \int_{t-L d}^{T_s} |p(\lambda)|^2 d\lambda	\nonumber \\
	&= \frac{1}{E_s} \int_{0}^{T_s} |p(\lambda)|^2 d\lambda - \frac{1}{E_s} \int_{0}^{t-L d} |p(\lambda)|^2 d\lambda \nonumber \\
	&= 1 - \K(t-L d).
	\label{eq:fall_func}
\end{align}
One can express $\phi_{13}(L,t)$ in a similar manner.
Suppose that $L |d_{1k}| = M_{1k} T_s$ for some positive integer $M_{1k}$ for $k=2,3$.
Hence, for $ \tau \in [j T_s, j T_s+T_s]$, 
we have\footnote{We use the convention of setting the quantities that involve a negative time index to zero.}
\ifdefined\twocolumnmode{
\begin{align}
 \phi_{11}(L,\tau)
  &=   \gamma_1 L \frac{E_s}{T_s} |x_1[j]|^2 		\label{eq:phi11}\\
 \phi_{12}(L,\tau) 
  &=   2 \gamma_1 L_{12} \frac{E_s}{T_s} \Big( 
    \Big( \sum_{r=1}^{M_{12}} |x_2[j-r]|^2 \Big) + 	\nonumber \\& \qquad
    \left( |x_2[j]|^2 - |x_2[j-M_{12}]|^2 \right) \K(t-j T_s) 
		\Big) \label{eq:phi12}\\
 \phi_{13}(L,\tau) 
  &=   2 \gamma_1 L_{13} \frac{E_s}{T_s} \Big( 
		\Big( \sum_{r=1}^{M_{13}} |x_3[j-r]|^2 \Big) + 	\nonumber \\& \qquad
    \left( |x_3[j]|^2 - |x_3[j-M_{13}]|^2 \right) \K(t-j T_s)  
  \Big) .
	\label{eq:phi13}
\end{align}
}\else{
\begin{align}
 \phi_{11}(L,\tau)
  &=   \gamma_1 L \frac{E_s}{T_s} |x_1[j]|^2 		\label{eq:phi11}\\
 \phi_{12}(L,\tau) 
  &=   2 \gamma_1 L_{12} \frac{E_s}{T_s} \Big( 
    \Big( \sum_{r=1}^{M_{12}} |x_2[j-r]|^2 \Big) + 	
    \left( |x_2[j]|^2 - |x_2[j-M_{12}]|^2 \right) \K(t-j T_s) 
		\Big) \label{eq:phi12}\\
 \phi_{13}(L,\tau) 
  &=   2 \gamma_1 L_{13} \frac{E_s}{T_s} \Big( 
		\Big( \sum_{r=1}^{M_{13}} |x_3[j-r]|^2 \Big) + 	
    \left( |x_3[j]|^2 - |x_3[j-M_{13}]|^2 \right) \K(t-j T_s)  
  \Big) .
	\label{eq:phi13}
\end{align}
}\fi
By substituting (\ref{eq:phi11})--(\ref{eq:phi13}) in (\ref{eq:phi1_ct}), we get
\begin{align}
	\phi_1(L,\tau) = \phi_1[j] + 2 \pi v_1[j] \K(\tau-j T_s)
\end{align}
where
\ifdefined\twocolumnmode{
\begin{align} 
\phi_1[j]
&= h_{11} |x_1[j]|^2
 + h_{12} \sum_{r=1}^{M_{12}} |x_2[j-r]|^2 
 \nonumber\\ &\qquad\qquad
 + h_{13} \sum_{r=1}^{M_{13}} |x_3[j-r]|^2 \label{eq:phi1j} \\  
v_1[j]
&= h_{12} \left( |x_2[j]|^2 - |x_2[j-M_{12}]|^2 \right)/2\pi
 \nonumber\\&
 + h_{13} \left( |x_3[j]|^2 - |x_3[j-M_{13}]|^2 \right)/2\pi \label{eq:v1j}
\end{align}
and
\begin{align}
 h_{11} &= \gamma_1 L \frac{E_s}{T_s}, \nonumber \\
 h_{12} &= 2 \gamma_1 L_{12} \frac{E_s}{T_s}, \nonumber \\
 h_{13} &= 2 \gamma_1 L_{13} \frac{E_s}{T_s}.
\label{eq:hcoeffs}
\end{align}
}\else{
\begin{align} 
\phi_1[j]
&= h_{11} |x_1[j]|^2
 + h_{12} \sum_{r=1}^{M_{12}} |x_2[j-r]|^2 
 + h_{13} \sum_{r=1}^{M_{13}} |x_3[j-r]|^2 \label{eq:phi1j} \\  
v_1[j]
&= h_{12} \left( |x_2[j]|^2 - |x_2[j-M_{12}]|^2 \right)/2\pi
 + h_{13} \left( |x_3[j]|^2 - |x_3[j-M_{13}]|^2 \right)/2\pi \label{eq:v1j}\\
 h_{11} &= \gamma_1 L \frac{E_s}{T_s} , \quad 
 h_{12} = 2 \gamma_1 L_{12} \frac{E_s}{T_s}, \quad 
 h_{13} = 2 \gamma_1 L_{13} \frac{E_s}{T_s}.
\label{eq:hcoeffs}
\end{align}
}\fi
Then by substituting in (\ref{eq:y1nj_noiseless_integral}), we have
\begin{align}
	\ifdefined\twocolumnmode & \fi
	\tilde{y}_{1,{\idx}}(j T_s+\beta_{11} L) 
	\ifdefined\twocolumnmode \nonumber\\ \fi
	&= x_1[j] \ E_s \ e^{i \phi_1[j]} \ \int_{0}^{T_s} \frac{|p(\tau)|^2}{E_s} e^{i 2\pi(v_1[j]-\idx) \K(\tau)} d\tau.
	\label{eq:y1n_integral}
\end{align}
By applying Lemma \ref{lemma:leibniz} in Appendix \ref{sec:ortho_impulses}
to evaluate the integral in (\ref{eq:y1n_integral}),
the noiseless part $\tilde{y}_{1,{\idx}}[j]$ of the output of the filter with index ${\idx}$ at time $j$ 
can be written as
\begin{align}
&\tilde{y}_{1,{\idx}}[j]   = 
x_1[j] E_s e^{i\phi_1[j]} \ u_{1,{\idx}}[j]
\label{eq:y1nj_without_noise}
\end{align}
where
\begin{align}
  u_{1,{\idx}}[j] 
  & = \left\{
  \begin{array}{ll}
	 \displaystyle
        \frac{\exp\left(i 2\pi(v_1[j] - {\idx})\right) - 1 }
	      {          			i 2\pi(v_1[j] - {\idx})            }	,&\text{ if } v_1[j] \neq {\idx} \\
  			1																											,&\text{ otherwise}.
 \end{array}
 \right. 
 \label{eq:u1nj}
\end{align}
The output of the filter with index ${\idx}$ at time $j$ is
\begin{align}
	y_{1,{\idx}}[j] = y_{1,{\idx}}(j T_s+\beta_{11} L) = \tilde{y}_{1,{\idx}}[j] + z_{1,{\idx}}[j]
	\label{eq:y1nj}
\end{align}
where 
\begin{align}
	z_{1,{\idx}}[j] = \left. z_1(\T) \star h_{\idx}(\T) \right|_{\T=j T_s+\beta_{11} L}.
	\label{eq:z1nj}
\end{align}
The variable $z_{1,{\idx}}[j]$ is Gaussian with mean 0 and variance $N E_s$.
Moreover, due to the orthogonality of the filter bank impulse responses, we have
$\mathbb{E}\big[ z_{1,{\idx_1}}[j] z_{1,\idx_2}^*[j] \big] = 0$ for all ${\idx_1} \neq \idx_2$,
which implies that 
the random variables $\{z_{1,{\idx}}[j]\}_{\idx \in \mathcal{F}_1}$ are independent.

\subsection{Discrete-Time Model}
\label{sec:dt_model}
The input $x_k[j]$ of transmitter $k$ to the channel at time $j$ is a scalar,
whereas the channel output $\yv{k}[j]$ at receiver $k$ at time $j$ is a vector
whose components are $y_{k,{\idx}}[j]$, $f \in \mathcal{F}_k$.
To compute mutual information, we now consider the codeword $X_k^{\blk} = (X_{k}[1], X_{k}[2], \cdots, X_{k}[{\blk}])$
and the receiver samples $\Yv{k}^{\blk} = (\Yv{k}[1], \Yv{k}[2], \cdots, \Yv{k}[{\blk}])$ as random variables.
The input-output relations are
\begin{align}
&Y_{k,{\idx}}[j] = X_k[j] ~ e^{ i \Phi_k[j] } ~ U_{k,{\idx}}[j] + Z_{k,{\idx}}[j]
\label{eq:model_yknj}
\end{align}
with
\ifdefined\twocolumnmode{
\begin{align}
\Phi_1[j]
&= h_{11} |X_1[j]|^2
+ h_{12} \sum_{r=1}^{M_{12}} |X_2[j-r]|^2
\nonumber\\&\
+ h_{13} \sum_{r=1}^{M_{13}} |X_3[j-r]|^2 
\\
\Phi_2[j]
&= h_{21} \sum_{r=1}^{M_{12}} |X_1[j+M_{12}-r]|^2
+ h_{22} |X_2[j]|^2
\nonumber\\&\
+ h_{23} \sum_{r=1}^{M_{23}} |X_3[j-r]|^2
\\
\Phi_3[j]
&= h_{31} \sum_{r=1}^{M_{13}} |X_1[j+M_{13}-r]|^2
\nonumber\\&\
+ h_{32} \sum_{r=1}^{M_{23}} |X_2[j+M_{23}-r]|^2
+ h_{33} |X_3[j]|^2
\label{eq:Phi_kj}
\end{align}
}\else{
\begin{align}
\Phi_1[j]
&= h_{11} |X_1[j]|^2
+ h_{12} \sum_{r=1}^{M_{12}} |X_2[j-r]|^2
+ h_{13} \sum_{r=1}^{M_{13}} |X_3[j-r]|^2 
\\
\Phi_2[j]
&= h_{21} \sum_{r=1}^{M_{12}} |X_1[j+M_{12}-r]|^2
+ h_{22} |X_2[j]|^2
+ h_{23} \sum_{r=1}^{M_{23}} |X_3[j-r]|^2
\\
\Phi_3[j]
&= h_{31} \sum_{r=1}^{M_{13}} |X_1[j+M_{13}-r]|^2
+ h_{32} \sum_{r=1}^{M_{23}} |X_2[j+M_{23}-r]|^2
+ h_{33} |X_3[j]|^2
\label{eq:Phi_kj}
\end{align}
}\fi
where $M_{12}$, $M_{13}$ and $M_{23}$ are positive integers and
\begin{align}
  U_{k,{\idx}}[j] 
  & = \left\{
  \begin{array}{ll}
	 \displaystyle
        \frac{\exp\left(i 2\pi(V_k[j] - {\idx})\right) - 1 }
	      {          			i 2\pi(V_k[j] - {\idx})            }	,&\text{ if } V_k[j] \neq {\idx} \\
  			1																											,&\text{ otherwise}
 \end{array}
 \right. 
 \label{eq:model_uknj}
\end{align}
where we define 
\ifdefined\twocolumnmode{
\begin{align}
	V_1[j] &\stackrel{\Delta}{=} h_{12}(|X_2[j]				|^2 - |X_2[j-M_{12}]|^2)/ 2\pi
														\nonumber\\&
														+  h_{13}(|X_3[j]				|^2 - |X_3[j-M_{13}]|^2)/ 2\pi \\
	V_2[j] &\stackrel{\Delta}{=} h_{21}(|X_1[j+M_{12}]|^2 - |X_1[j]				|^2)/ 2\pi
														\nonumber\\&
														+  h_{23}(|X_3[j]				|^2 - |X_3[j-M_{23}]|^2)/ 2\pi \\
	V_3[j] &\stackrel{\Delta}{=} h_{31}(|X_1[j+M_{13}]|^2 - |X_1[j]				|^2)/ 2\pi
														\nonumber\\&
														+  h_{32}(|X_2[j+M_{23}]|^2 - |X_2[j]				|^2)/ 2\pi.														
	\label{eq:model_vk}
\end{align}
}\else{
\begin{align}
	V_1[j] &\stackrel{\Delta}{=} h_{12}(|X_2[j]				|^2 - |X_2[j-M_{12}]|^2)/ 2\pi
														+  h_{13}(|X_3[j]				|^2 - |X_3[j-M_{13}]|^2)/ 2\pi \\
	V_2[j] &\stackrel{\Delta}{=} h_{21}(|X_1[j+M_{12}]|^2 - |X_1[j]				|^2)/ 2\pi
														+  h_{23}(|X_3[j]				|^2 - |X_3[j-M_{23}]|^2)/ 2\pi \\
	V_3[j] &\stackrel{\Delta}{=} h_{31}(|X_1[j+M_{13}]|^2 - |X_1[j]				|^2)/ 2\pi
														+  h_{32}(|X_2[j+M_{23}]|^2 - |X_2[j]				|^2)/ 2\pi.														
	\label{eq:model_vk}
\end{align}
}\fi
$Z_{k,{\idx}}[j]$ models the noise at filter $\idx$ of receiver $k$ at time $j$, 
and the random variables $\{Z_{k,{\idx}}[j]\}_{k,\idx,j}$ are independent circularly-symmetric complex Gaussian random variables with mean 0  and variance $N$.
We regard the $h_{k\ell}$ as \textit{channel coefficients} that are time invariant and known globally.
The following symmetric power constraints are imposed:
\begin{align} \label{eq:power-constraint-gvm}
	\frac{1}{{\blk}} \sum_{j=1}^{{\blk}} \mathbb{E}\left[ |X_{k}[j]|^2 \right] \le P, \quad k=1,2,3.
\end{align}

A \textit{scheme} is a collection 
$\{ (\mathcal{C}_1(P,N),\mathcal{C}_2(P,N),\mathcal{C}_3(P,N) )\}$ of triples of codes
indexed by $(P,N)$, such that user $k$ uses the code $\mathcal{C}_k(P,N)$ 
that satisfies the power constraint and achieves an information rate $R_k(P,N)$ for $k=1,2,3$
where
\begin{align}
R_k(P,N) = I(X_k;\Yv{k}) \equiv \lim_{n \rightarrow \infty} \frac{1}{n} I(X_k^n;\Yv{k}^n).
\end{align}

We extend the definitions of pre-logs made in Definitions \ref{def:highp-prelog} and \ref{def:lowp-prelog}.
\begin{definition}
The \textit{high-power} pre-log triple $(\overline{r}_1, \overline{r}_2, \overline{r}_3)$ is achieved by a scheme 
if the rates satisfy
\begin{equation}
\overline{r}_k(N) = \lim_{P \rightarrow \infty} \frac{R_k(P,N)}{\log(P/N)}
\text{ for } k=1,2,3.
\end{equation}
\end{definition}
\begin{definition}
The \textit{low-noise} pre-log triple $(\underline{r}_1, \underline{r}_2, \underline{r}_3)$ is achieved by a scheme 
if the rates satisfy
\begin{equation}
\underline{r}_k(P) = \lim_{N \rightarrow 0} \frac{R_k(P,N)}{\log(P/N)}
\text{ for } k=1,2,3.
\end{equation}
\end{definition}

The (high-power or low-noise) pre-log triple $(1/2,1/2,1/2)$ can be achieved if all users use phase modulation only (see Sec. \ref{sec:inner-bound}).
It is not obvious whether $(1/2,1/2,1/2)$ is achievable by using amplitude modulation only, e.g., such as in Sec. \ref{sec:zero-gvm-am}. 
This is because $U_{k,{\idx}}[j]$ in (\ref{eq:model_yknj}) has a random amplitude.
We show in Sec. \ref{sec:ifocus} that the high-power pre-log triple $(1,1,1)$ can be achieved 
for any positive $N$ through \textit{interference focusing}.

\subsection{Inner Bound: Phase Modulation}
\label{sec:inner-bound}
Suppose we use only the filter with index $f=0$.
Suppose further that the inputs $X_k^n$ of user $k$ are i.i.d.
with a constant amplitude $\sqrt{P}$ and a uniformly random phase (a ring), i.e., we have
\begin{align}
X_k[j] = \sqrt{P} e^{i \Phi_{X,k}[j]}
\end{align}
where $\Phi_{X,k}[j]$ is uniform on $[-\pi,\pi)$ for $j=1,2,\ldots,n$.
Therefore, the outputs become
\begin{align}
&Y_{k,0}[j] = X_k[j] ~ e^{ i \Phi_k[j] } ~ U_{k,0}[j] + Z_{k,0}[j]
\end{align}
with
\begin{align}
\Phi_1[j] &= [ h_{11} + h_{12} M_{12} + h_{13} M_{13} ] P \nonumber
\\
\Phi_2[j] &= [ h_{21} M_{12} + h_{22} + h_{23} M_{23} ] P \nonumber
\\
\Phi_3[j] &= [ h_{31} M_{13} + h_{32} M_{23} + h_{33} ] P
\end{align}
i.e., the phase $\Phi_k[j]$ is constant for all $j=1,\ldots,n$.
Moreover, we have
\begin{align}
U_{1,0}[j] &= 1, \qquad \max\{M_{12},M_{13}\} < j \leq n	\nonumber \\
U_{2,0}[j] &= 1, \qquad M_{23} < j < n-M_{12} 				\nonumber \\
U_{3,0}[j] &= 1, \qquad 1 \leq j < n-\max\{M_{13},M_{23}\}.
\end{align}
Thus, the users are decoupled under constant amplitude modulation, except near the beginning and the end of transmission.
We have
\begin{align}
\frac{1}{n} I(X_1^n;\Yv{1}^n)
&\stackrel{(a)}{\geq}	\frac{1}{n} I(X_1^n;Y_{1,0}^n) \nonumber \\
&\stackrel{(b)}{\geq} \frac{1}{n} \sum_{j=1}^{n} I(X_1[j];Y_{1,0}[j]) \nonumber \\
&\stackrel{(c)}{\geq} \frac{1}{n} \sum_{j=\max\{M_{12},M_{13}\}+1}^{n} I(X_1[j];Y_{1,0}[j]) \nonumber \\
&\stackrel{(d)}{\geq} \left( \frac{n-\max\{M_{12},M_{13}\}}{n} \right) \left[ \frac{1}{2} \log\left(\frac{P}{N}\right) - 1\right]
\end{align}
where 
$(a)$ follows from the chain rule and the non-negativity of mutual information,
$(b)$ follows because $X_1,\ldots,X_n$ are i.i.d. and because conditioning does not increase entropy,
$(c)$ follows from the non-negativity of mutual information and
$(d)$ holds because (see Theorem \ref{theorem:one-ring})
\begin{align}
I(X_1[j];Y_{1,0}[j]) \geq \frac{1}{2} \log\left(\frac{2 P}{N}\right) - 1.
\end{align}
As $n \rightarrow \infty$, we have
\begin{align}
R_1(P,N)
&\geq \frac{1}{2} \log\left(\frac{2 P}{N}\right) - 1.
\end{align}
By using similar steps for users 2 and 3, we have
\begin{align}
R_k(P,N)
&\geq \frac{1}{2} \log\left(\frac{2 P}{N}\right) - 1
\end{align}
for $k=1,2,3$ which implies that the pre-log triple $(1/2,1/2,1/2)$ is achieved by using one receiver filter and phase modulation.

\subsection{Interference Focusing}
\label{sec:ifocus}
We use \textit{interference focusing}, i.e., we focus the phase interference on \textit{one} point 
by imposing the following constraints on the transmitted symbols:
\begin{align}
 h_{21} |X_1[j]|^2 = 2\pi \tilde{n}_{21},\ h_{31} |X_1[j]|^2 = 2\pi \tilde{n}_{31},	\nonumber \\
 h_{12} |X_2[j]|^2 = 2\pi \tilde{n}_{12},\ h_{32} |X_2[j]|^2 = 2\pi \tilde{n}_{32},	\nonumber \\
 h_{13} |X_3[j]|^2 = 2\pi \tilde{n}_{13},\ h_{23} |X_3[j]|^2 = 2\pi \tilde{n}_{23},
\end{align}
where
$\tilde{n}_{21}$, $\tilde{n}_{31}$,
$\tilde{n}_{12}$, $\tilde{n}_{32}$,
$\tilde{n}_{13}$ and $\tilde{n}_{23} \in \mathbb{N}$,
which ensures that the XPM interference is eliminated.
Suppose that
$h_{21}$, $h_{31}$,
$h_{12}$, $h_{32}$,
$h_{13}$ and $h_{23}$
are rational.
Then the interference focusing constraints become
\begin{align}
 |X_k[j]|^2 = 2\pi \hat{p}_k ~ \tilde{n}_{k}
\end{align}
where
\begin{align}
 \hat{p}_1 \stackrel{\Delta}{=} \text{lcm}(\text{den}(h_{21}),\text{den}(h_{31})), \label{eq:p1_hat}\\
 \hat{p}_2 \stackrel{\Delta}{=} \text{lcm}(\text{den}(h_{12}),\text{den}(h_{32})), \label{eq:p2_hat}\\
 \hat{p}_3 \stackrel{\Delta}{=} \text{lcm}(\text{den}(h_{13}),\text{den}(h_{23}))  \label{eq:p3_hat}
\end{align}
where $\text{den}(x)$ is the denominator of a rational number $x$.

Because of the power constraint, 
only a subset $\mathcal{P}_k$ of the allowed rings is actually used.
In this case, $V_k[j] \in \mathcal{V}_k$, for $k=1,2,3$, where 
\begin{align}
	\mathcal{V}_k = \left\{ \sum_{j \neq k} d_j: d_j \in \mathcal{D}_j, j \in\{1,2,3\} \right\}
\end{align}
and
\begin{align}
	\mathcal{D}_k = \left\{\frac{p-p^\prime}{\hat{p_k}}: p \in \mathcal{P}_k, p^\prime \in \mathcal{P}_k \right\}
\end{align}
which leads us to choose the sets of ``normalized frequencies'' $\mathcal{F}_k$
of the filter banks at the receivers as
$\mathcal{F}_k = \mathcal{V}_k$.

Thus, under interference focusing, the output at receiver $k$ at time $j$ is a vector $\Yv{k}[j]$, 
whose components are $\{Y_{k,{\idx}}[j]\}_{{\idx} \in \mathcal{V}_k}$,
where
\begin{align}
Y_{k,{\idx}}[j]   = X_k[j] \ \exp\left( i h_{kk} |X_k[j]|^2 \right) U_{k,{\idx}}[j] + Z_{k,{\idx}}[j]
\label{eq:ykn}
\end{align}
and where 
\begin{align}
  U_{k,{\idx}}[j]
  = \left\{ \begin{array}{ll}
		1	,&\text{ if } V_k[j] = {\idx}, \\
  	0	,&\text{ otherwise.}
 \end{array} \right.
 \label{eq:u_kn_ifocus}
\end{align}
This means that
exactly one filter (the filter with index $V_k[j]$) output among all the filters contains the signal corrupted by noise,
while all other filters put out noise.
Therefore, we have
\begin{align}
	&\frac{1}{n} I(X_k^n;\Yv{k}^n) \nonumber\\
	&\stackrel{(a)}{\geq} 
	\frac{1}{n} \sum_{j=1}^n I(X_k[j];\Yv{k}[j]) \nonumber\\
	&\stackrel{(b)}{\geq}
	\frac{1}{n} \sum_{j=1}^n  I\left(X_k[j]; Y_{k,V_k[j]}[j] \right) \nonumber\\
	&\stackrel{(c)}{=} I\left( X_k[1];X_k[1] e^{i h_{kk}|X_k[1]|^2}+Z_{k,V_k[1]}[1] \right)
\end{align}
where
$(a)$ follows because the $X_k^n$ are i.i.d. and because conditioning does not increase entropy;
$(b)$ follows from the chain rule and the non-negativity of mutual information
(it can be shown that equality holds, see Appendix \ref{sec:sufficient})
; and
$(c)$ holds because the $X_k^n$ are i.i.d. and the channel becomes a memoryless time-invariant channel under interference focusing.
It follows from Theorem \ref{theorem:multi-ring} that by using interference focusing,
we have
\begin{align}
	\lim_{P \rightarrow \infty} \frac{I\left( X_k[1];X_k[1] e^{i h_{kk}|X_k[1]|^2}+Z_{k,V_k[1]}[1] \right)}{\log(P/N)}
	\geq 1
\end{align}
which implies that $\overline{r}_k \geq 1$ for $k=1,2,3$.
Hence, the high-power pre-log triple $(1,1,1)$ is achievable.
We again remark that the above analysis generalizes for different power constraints at the transmitters.
However, the question of whether all users can simultaneously achieve a low-noise pre-log of 1 is open for both models with and without GVM.

The following (downsized) example illustrates our receiver structure, 
and the role that interference focusing plays in choosing its parameters.

\textit{Example:}
Consider 2 transmitters that use a rectangular pulse, i.e., suppose
\begin{align}
 p(t) = \left\{ 
  \begin{array}{ll}
  \displaystyle \sqrt{E_s/T_s} ,	& 0 \leq t < T_s \\
  0,					& \text{otherwise}
  \end{array}
 \right.
\label{eq:square_pulse_def}
\end{align}
where the power constraints are $P_1 = 8$ and $P_2 = 7$
on transmitter 1 and 2, respectively.
Suppose that
$h_{12} = 5$, 
$h_{21} = 4$.
Since this is a two-user system, we may use 
(\ref{eq:ring-constraint-1}) and (\ref{eq:ring-constraint-2}) rather than (\ref{eq:p1_hat}) and (\ref{eq:p2_hat}), i.e.,
we use $\hat{p}_1 = 1/h_{21} = 0.25$ and $\hat{p}_2 = 1/h_{12} = 0.2$.
Suppose that the users choose 
the power levels 
$\mathcal{P}_1 = \{2\pi\hat{p}_{1} \tilde{n}_1: \tilde{n}_1 = 1,4,9\} = \{ 0.5\pi, 2\pi, 4.5\pi \}$
and
$\mathcal{P}_2 = \{2\pi\hat{p}_{2} \tilde{n}_2: \tilde{n}_2 = 2,8\} = \{ 0.8\pi, 3.2\pi \}$ 
(see Fig. \ref{fig:constell}).
These choices satisfy the power constraints 
and eliminate the interference.
The parameters of the filter banks are 
$\mathcal{F}_1 = \mathcal{V}_1 = \{-6,0,6\}$ and
$\mathcal{F}_2 = \mathcal{V}_2 = \{-8,-5,-3,0,3,5,8\}$.
In other words, receiver 1 has 3 filters
whose frequency responses are sinc functions
centered at $f_1-6/T_s$, $f_1$, and $f_1+6/T_s$,
whereas
receiver 2 has 7 filters
whose frequency responses are sinc functions
centered at 7 different frequencies 
(see Fig. \ref{fig:filter_bank}).
This shows that, because of the nonlinearity, 
the receivers need to extract information from a ``bandwidth'' larger than
the ``bandwidth'' of the transmitted signal.

\begin{figure}
	\centering
		\includegraphics[width= {\ifdefined\twocolumnmode 0.49\columnwidth \else 0.36\textwidth \fi} ]{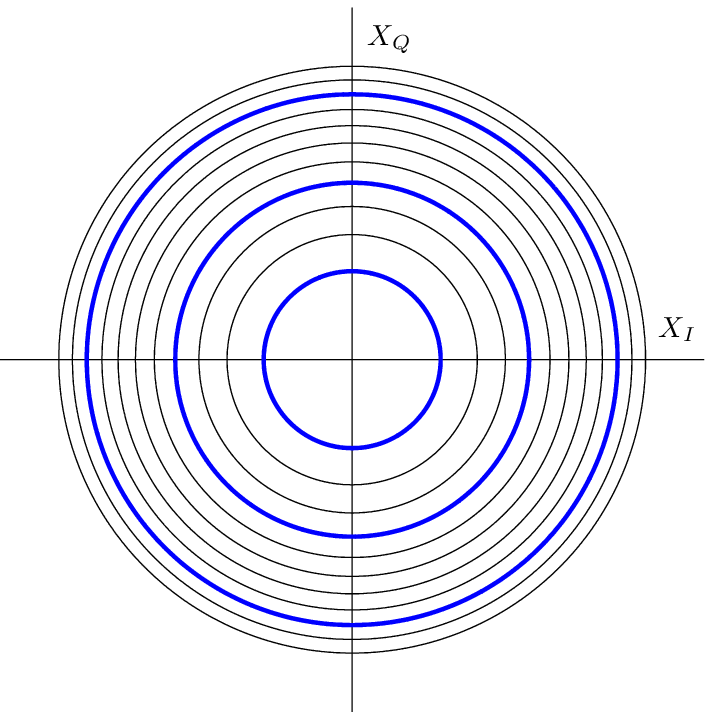}
		\includegraphics[width= {\ifdefined\twocolumnmode 0.49\columnwidth \else 0.36\textwidth \fi} ]{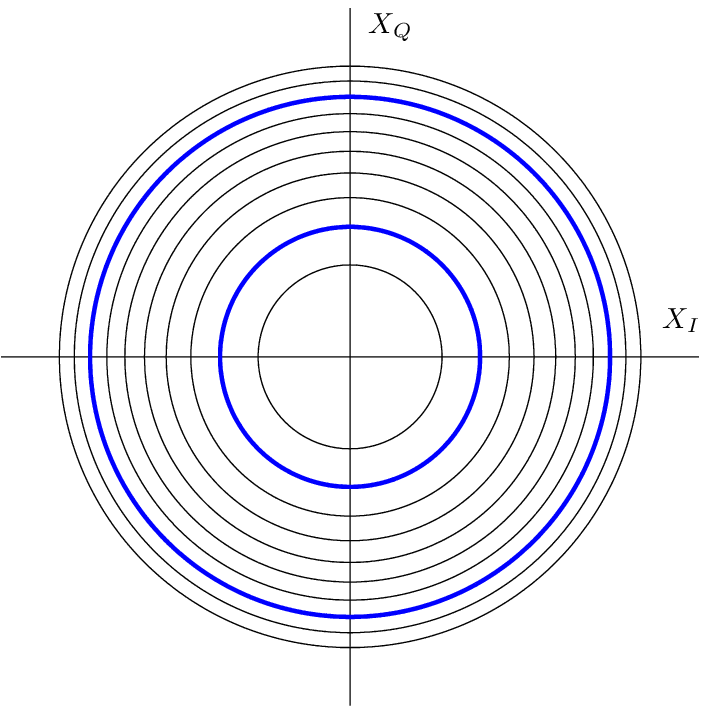}
	\caption{Ring modulation used by transmitter 1 (left) and transmitter 2 (right).
	The thin lines are the rings allowed by interference focusing, 
	and the thick blue lines are the rings selected for transmission.}
	\label{fig:constell}
\end{figure}

\begin{figure}
	\centering
		\includegraphics[width= {\ifdefined\twocolumnmode 1.0\columnwidth \else 0.75\textwidth \fi} ]{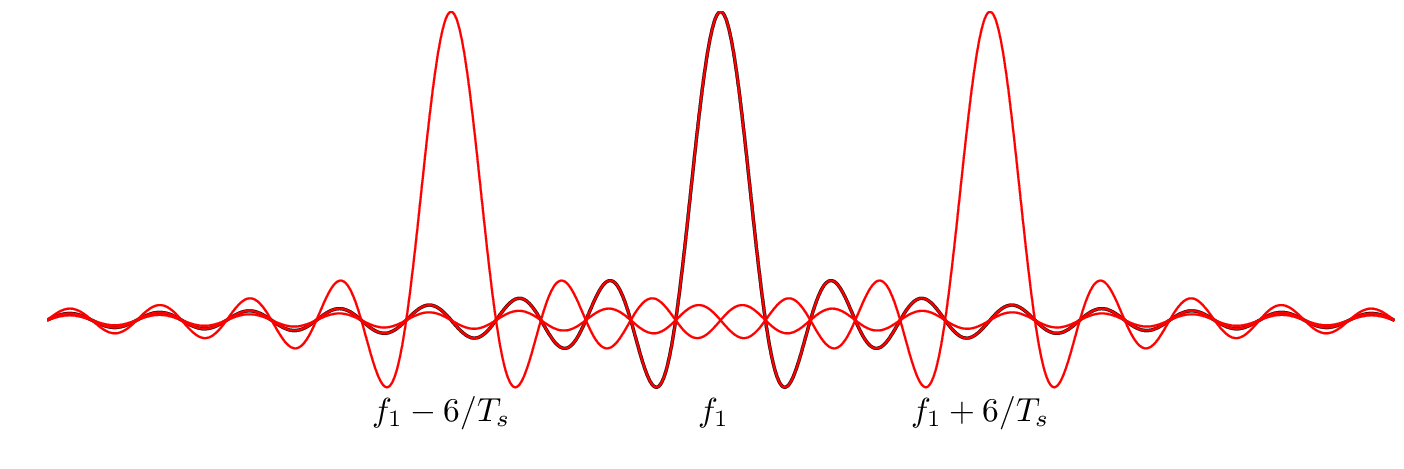} \\
		\includegraphics[width= {\ifdefined\twocolumnmode 1.0\columnwidth \else 0.75\textwidth \fi} ]{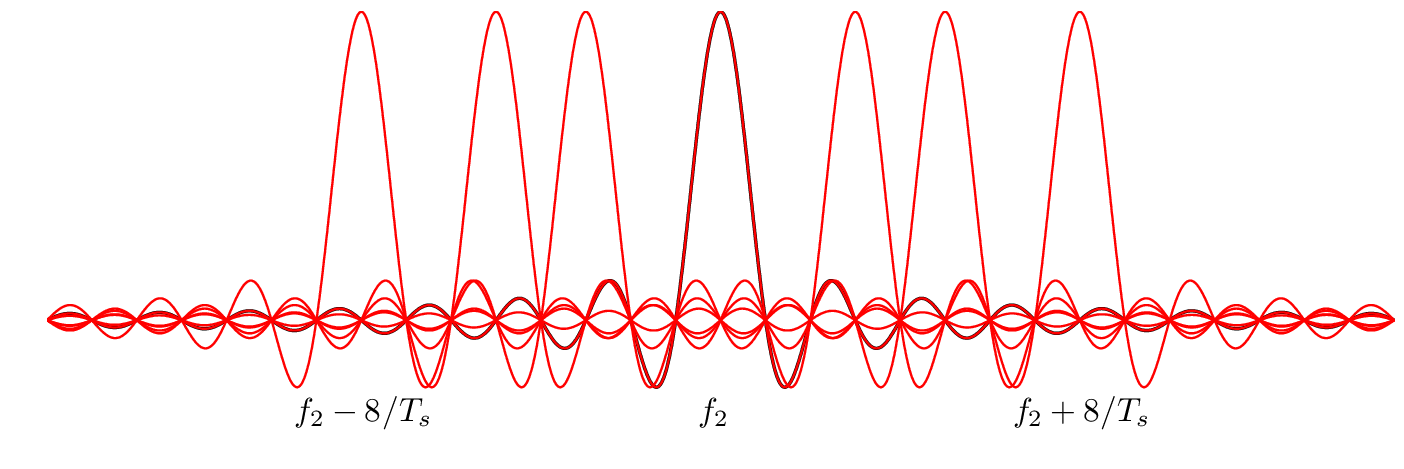}
	\caption{Frequency responses of the filters at receivers 1 (top) and 2 (bottom).}
	\label{fig:filter_bank}
\end{figure}

\subsection{Outer Bound}
\label{sec:outer-bound}
\subsubsection{Interference Focusing}
We show next that the maximal pre-log triple for the model of Sec. \ref{sec:dt_model} is $(1,1,1)$ 
when interference focusing is used.
We have
\begin{align}
	 I( X_1^n ; \Yv{1}^n )
	&\stackrel{(a)}{\leq} 
	I( X_1^n ; \Yv{1}^n , V_1^n) \nonumber\\
	&\stackrel{(b)}{=}  
	I( X_1^n ; \Yv{1}^n | V_1^n) \nonumber\\
	&\stackrel{(c)}{=}    
	I( X_1^n ; Y_{1,V_1[1]} Y_{1,V_1[2]} \ldots Y_{1,V_1[n]} | V_1^n) \nonumber\\	
	&\stackrel{(d)}{=}    
	I( X_1^n ; Y_{1,V_1[1]} Y_{1,V_1[2]} \ldots Y_{1,V_1[n]} ) \nonumber\\		
	&{\leq} n \log\left( 1 + \frac{P}{N} \right).
\end{align}
Step $(a)$ follows from the chain rule and the non-negativity of mutual information;
$(b)$ holds because $X_1^n$ is independent of $V_1^n$;
$(c)$ holds because $Y_{1,f}[j] = Z_{1,f}[j]$ for $f \neq V_1[j]$
and the variables $\{ Z_{1,f}[j] : j=1,\ldots,n \text{ and } f \neq V_1[j] \}$ are independent of
$Y_{1,V_1[1]}, Y_{1,V_1[2]}, \ldots, Y_{1,V_1[n]}$ and $X_1^n$ and
$(d)$ holds because $X_1^n$ and $Y_{1,V_1[1]}$, $Y_{1,V_1[2]}$, $\ldots$, $Y_{1,V_1[n]}$ are independent of $V_1^n$ 
(which follows from $Z_{1,V_1[1]}, Z_{1,V_1[2]}, \ldots, Z_{1,V_1[n]}$ being i.i.d.).
By using a similar argument for receiver 2 and receiver 3, we eventually have
\begin{align}
	R_k \leq \log\left( 1 + \frac{P}{N} \right)
\end{align}
for $k=1,2,3$ which implies that the maximal pre-log triple is $(1,1,1)$.

\subsubsection{General Modulation}
We show next that the maximal pre-log triple is $(1,1,1)$ 
for any modulation scheme.
We use a genie-aided strategy. Suppose a genie reveals the codewords $x_2^n$ and $x_3^n$ of users 2 and 3 to receiver 1 prior to transmission.
Receiver 1 generates $\phi_{12}(t)$ and $\phi_{13}(t)$ according to (\ref{eq:phi12_definition}) and (\ref{eq:phi13_definition}), respectively,
and uses them to cancel XPM in the received signal, i.e., receiver 1 generates
\begin{align}
\tilde{r}_1(t) = r_1(t) ~ e^{-j \phi_{12}(t) -j \phi_{13}(t)}.
\end{align}
The XPM-free signal $\tilde{r}(t)$ is fed to a filter with an impulse response $p^*(-t)$ and the output of the filter is sampled at symbol rate.
Matched filtering with symbol rate sampling does not incur any information loss because XPM is canceled.
The $j$-th filter output is 
\begin{align}
\tilde{y}_1[j] 
&= \left. \tilde{r}_1(t) \star p^*(-t) \right|_{t=j \Ts} \nonumber \\
&= x_1[j] e^{j h_{11} |x_1[j]|^2} + \tilde{z}_1[j]
\label{eq:ytilde_genie}
\end{align}
where $\tilde{z}_1[j]$ is a realization of a Gaussian random variable with mean zero and variance $N$.
The channel (\ref{eq:ytilde_genie}) is a memoryless AWGN channel and therefore we have
\begin{align}
	 I( X_1^n ; \tilde{Y}_1[1] \tilde{Y}_1[2] \ldots \tilde{Y}_1[n])
	&{\leq} n \log\left( 1 + \frac{P}{N} \right).
\end{align}
Similarly, it can be shown that the maximum pre-log for users 2 and 3 is 1,
implying that the maximal pre-log triple is $(1,1,1)$.

\section{Discussion}
\label{sec:ialign}

Interference focusing is reminiscent of interference alignment,
which refers to techniques of signal construction so that
undesired signals at each receiver arrive along the same dimensions
while the desired signal can be resolved through the remaining dimensions.
We highlight the main differences between interference alignment and interference focusing.

Cadambe and Jafar \cite{CJ08} introduced an (asymptotic) interference alignment scheme 
for $K$-user single-input single-output linear interference channels 
which achieves the optimal degrees of freedom (DoF)
as long as the channel coefficients are time-varying (or frequency-selective).
Their scheme relies on beamforming over symbol extensions 
to separate signal spaces based on linear independence.

For constant channels, especially with real channel coefficients, 
interference alignment along linearly independent dimensions may not achieve the optimal DoF.
Motahari et al \cite{MotahariRealIA} developed interference alignment along rationally independent dimensions 
to achieve the optimal DoF.
Their approach is referred to as \emph{real} interference alignment.

Interference alignment lets each user achieve half the DoF that can be achieved in the absence of all other interferers
(colloquially: each user gets half ``the cake'').
In contrast, orthogonalization techniques, e.g., time division or frequency division, split the resources among $K$ users
so that each user gets only 1/$K$ of the resources.

In optical fiber, splitting the bandwidth among $K$ users by using WDM 
does not guarantee that each user gets 1/$K$ of the interference-free capacity (IFC)
because of the fiber nonlinearity. 
In the zero-dispersion case, for example, each user gets 1/(2$K$) of the IFC using Gaussian modulation.
We have shown interference focusing enables each user to get 1/$K$ of the IFC, but not half of it, at high SNR.
We remark that interference focusing requires neither symbol extensions nor global CSI.

\section{Conclusion}
\label{sec:conclusion}

We introduced two discrete-time interference channel models
based on a simplified optical fiber model. 
We used coupled differential equations derived from the NLS equation to develop our models.
In the first model, there was no dispersion.
This discrete-time model was justified 
by using a rectangular pulse shape at the transmitters
and matched filters at the receivers.
The nonlinear nature of the fiber-optic medium causes the users to suffer from amplitude-dependent phase interference. 
We introduced a new technique called interference focusing that lets the users take advantage of all the available amplitude and phase degrees of freedom
at high transmission powers. 
In the second model, the second-order dispersion is negligible.
However, we included non-zero GVM
as well as nonlinearity. 
We justified this discrete-time model 
by using a time-limited pulse shape at the transmitters
and a bank of ``frequency-shifted'' matched filters at each receiver.
We proved that all users can achieve a high-power pre-log of 1 simultaneously
by using interference focusing.
We also showed that interference focusing is optimal (for the model of Sec. \ref{sec:dt_model}) in the pre-log sense.

\appendices

\section{Upper Bound on the Modified Bessel Function of the First Kind of Order Zero}
\begin{lemma} \label{lemma:besseli_lb}
\begin{align}
	I_0(z) \leq \frac{e^z}{\sqrt{z}}, \quad z > 0
\end{align}
\end{lemma}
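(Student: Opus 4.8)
The plan is to work from the standard integral representation $I_0(z) = \frac{1}{\pi}\int_0^\pi e^{z\cos\theta}\,d\theta$, which exhibits the $e^z$ growth explicitly through the maximum of the integrand at $\theta=0$. The idea is to factor out $e^z$ by writing $e^{z\cos\theta} = e^z\,e^{-z(1-\cos\theta)}$ and then control the Gaussian-like decay of $e^{-z(1-\cos\theta)}$ away from the peak. This is also consistent with the classical asymptotic $I_0(z)\sim e^z/\sqrt{2\pi z}$, which already suggests that $e^z/\sqrt{z}$ is a valid (non-tight) upper envelope since $1/\sqrt{z}>1/\sqrt{2\pi z}$.

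First I would establish the elementary pointwise bound $\cos\theta \le 1 - \frac{2\theta^2}{\pi^2}$ on $[0,\pi]$. This follows from the identity $1-\cos\theta = 2\sin^2(\theta/2)$ together with the concavity (Jordan-type) inequality $\sin x \ge \frac{2x}{\pi}$ for $x\in[0,\pi/2]$; substituting $x=\theta/2$ and squaring the nonnegative quantities gives $\sin^2(\theta/2)\ge \theta^2/\pi^2$, hence the claim. Inserting this into the integral yields $I_0(z) \le \frac{e^z}{\pi}\int_0^\pi e^{-2z\theta^2/\pi^2}\,d\theta$.

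Next I would enlarge the domain of integration from $[0,\pi]$ to $[0,\infty)$, which only increases the value because the integrand is nonnegative, and evaluate the resulting Gaussian integral via $\int_0^\infty e^{-a\theta^2}\,d\theta = \frac{1}{2}\sqrt{\pi/a}$ with $a = 2z/\pi^2$. This produces $I_0(z) \le \sqrt{\pi/8}\,\frac{e^z}{\sqrt{z}}$. Since $\sqrt{\pi/8}\approx 0.627 < 1$, the stated bound $I_0(z)\le e^z/\sqrt{z}$ follows with room to spare, uniformly in $z>0$.

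I do not expect a serious obstacle here; the only points requiring care are the direction of the Jordan inequality $\sin x\ge 2x/\pi$ (a \emph{lower} bound on the sine, which yields the needed \emph{upper} bound on $\cos\theta$) and the numerical check that the constant $\sqrt{\pi/8}$ is indeed below $1$. One could alternatively attempt a term-by-term comparison of the series $I_0(z)=\sum_{k\ge 0}(z/2)^{2k}/(k!)^2$ against $\frac{1}{\sqrt{z}}\sum_{n}z^n/n!$, but the $1/\sqrt{z}$ factor makes that bookkeeping awkward, so the integral route above is cleaner and the one I would adopt.
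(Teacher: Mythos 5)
Your proof is correct, and while it rests on the same core mechanism as the paper's---bounding the integrand in $I_0(z) = \frac{1}{\pi}\int_0^\pi e^{z\cos\theta}\,d\theta$ by a Gaussian via a quadratic upper bound on $\cos\theta$---the execution differs in a way worth noting. The paper bounds $\cos\theta \le 1 - 4\theta^2/\pi^2$ only on $[0,\pi/2]$ (justified by the infinite product formula for cosine) and uses the crude bound $\cos\theta \le 0$ on $[\pi/2,\pi]$; this split produces an additive term $\frac{1}{2}$ that must then be absorbed by a separate argument showing $\frac{\sqrt{\pi}}{4}\,e^z/\sqrt{z} \ge \frac{1}{2}$ for all $z>0$ (via minimizing $e^z/\sqrt{z}$). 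You instead use the single bound $\cos\theta \le 1 - 2\theta^2/\pi^2$, valid on all of $[0,\pi]$, obtained from the half-angle identity $1-\cos\theta = 2\sin^2(\theta/2)$ and Jordan's inequality $\sin x \ge 2x/\pi$ on $[0,\pi/2]$; this eliminates the domain split, the additive constant, and the minimization step, and after extending the Gaussian integral to $[0,\infty)$ you land directly on the constant $\sqrt{\pi/8} \approx 0.627$, which is in fact sharper than the paper's intermediate constant $\sqrt{\pi}/2 \approx 0.886$. The trade-off is that your parabola has the smaller curvature coefficient $2/\pi^2$ versus the paper's $4/\pi^2$, but since both routes end comfortably below $1$, your uniform bound buys a cleaner one-pass argument at no cost to the conclusion.
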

\begin{proof}
We have
\begin{align}
&\cos{\theta} \leq 1 - 4 \theta^2/\pi^2, & 0 \leq \theta \leq \pi/2 		\label{eq:cos_ub_domain1}\\
&\cos{\theta} \leq 0                   , & \pi/2 \leq \theta \leq \pi		\label{eq:cos_ub_domain2}
\end{align}
where (\ref{eq:cos_ub_domain1})
follows by the infinite product form
\begin{align}
	\cos x = \prod_{n=1}^\infty \left[ 1- \frac{4x^2}{\pi^2 (2 n -1)^2}\right].
\end{align}
We thus have
\begin{align}
	I_0(z)
	& = \frac{1}{\pi} \int_{0}^{\pi} e^{z \cos\theta} \, d\theta \nonumber \\
	& \stackrel{(a)}{\le} \frac{1}{\pi} \left[ \int_{0}^{\pi/2} e^{z (1- 4 \theta^2/\pi^2)} \, d\theta + \int_{\pi/2}^{\pi} \, d\theta \right] \nonumber \\
	& =									  \frac{\sqrt{\pi}}{4} \frac{e^z}{\sqrt{z}} \left( 1 - 2 Q(\sqrt{2z}) \right) + \frac{1}{2} \nonumber \\
	& \stackrel{(b)}{\le} \frac{\sqrt{\pi}}{2} \frac{e^z}{\sqrt{z}} \nonumber \\
	&\leq \frac{e^z}{\sqrt{z}}
\end{align}
where
\begin{align}
Q(z) = \int_{z}^{\infty} \frac{1}{\sqrt{2 \pi}} e^{-x^2/2}  dx.
\end{align}
Step ($a$) follows because the exponential function is a monotonic increasing function and 
by (\ref{eq:cos_ub_domain1})--(\ref{eq:cos_ub_domain2}),
while
($b$) holds because $Q(z) \ge 0$ and
\begin{align}
\frac{\sqrt{\pi}}{4} \frac{e^z}{\sqrt{z}}
\geq \frac{\sqrt{\pi}}{4} \min_{z \geq 0} \frac{e^z}{\sqrt{z}}
= \frac{\sqrt{2 e \pi}}{4} 
\ge \frac{1}{2}.
\end{align}
\end{proof}

\section{Expected Value of the Logarithm of a Rician R.V.}

Consider the following functions.
\begin{itemize}
\item 
Gamma function $\Gamma(z)$ \cite[6.1.1]{Abramowitz1972}
\begin{align}
	\Gamma(z) = \int_{0}^{\infty} t^{z-1} e^{-t} dt
	\label{eq:gamma-func-def}
\end{align}

\item 
Psi (Digamma) function $\psi(z)$ \cite[6.3.1]{Abramowitz1972}
\begin{align}
	\psi(z) =\frac{d}{dz} \ln\Gamma(z) = \frac{ \Gamma^\prime(z) }{\Gamma(z)}  
\end{align}

\item 
Upper incomplete Gamma function $\Gamma(a,x)$ \cite[6.5.3]{Abramowitz1972}
\begin{align}
	\Gamma(a,x) = \int_{x}^{\infty} t^{a-1} e^{-t} dt, \qquad a>0
	\label{eq:upper-incomplete-gamma}
\end{align}

\end{itemize}

We derive several useful lemmas concerning these functions.

\begin{lemma} \label{lemma:integral-lemma1}
	\begin{align}
	  \int_{0}^\infty {e^{- \frac{x^2}{2} } x^{2k+1} \ln(x) dx}
		= 2^{k-1} \left(  \Gamma(k+1) \ln(2) + \Gamma^\prime(k+1) \right) 
	\end{align}	
\end{lemma}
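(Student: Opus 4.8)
The plan is to reduce the integral to standard Gamma-function integrals by the substitution $t = x^2/2$, after which the two resulting pieces are identified as $\Gamma(k+1)$ and its derivative $\Gamma'(k+1)$.

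First I would set $t = x^2/2$, so that $x = \sqrt{2t}$, $e^{-x^2/2} = e^{-t}$, and $dx = \tfrac{1}{\sqrt{2}}\, t^{-1/2}\, dt$. Under this change of variables one has $x^{2k+1} = 2^{k+1/2}\, t^{k+1/2}$ and, crucially, the logarithm splits as
\begin{align}
\ln(x) = \ln\!\big(\sqrt{2t}\big) = \tfrac{1}{2}\big(\ln 2 + \ln t\big).
\end{align}
Collecting the powers of $2$ (namely $2^{k+1/2}\cdot \tfrac12 \cdot 2^{-1/2} = 2^{k-1}$) and the powers of $t$ (namely $t^{k+1/2}\cdot t^{-1/2} = t^k$), the integral becomes
\begin{align}
\int_{0}^\infty e^{-x^2/2} x^{2k+1}\ln(x)\, dx = 2^{k-1}\int_{0}^\infty t^{k}\big(\ln 2 + \ln t\big)\, e^{-t}\, dt.
\end{align}

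Next I would split the remaining integral into two terms. The first term, $\ln 2 \int_{0}^\infty t^{k} e^{-t}\, dt$, is exactly $\ln 2 \cdot \Gamma(k+1)$ by the definition \eqref{eq:gamma-func-def}. For the second term I would differentiate the integral representation $\Gamma(z) = \int_{0}^\infty t^{z-1} e^{-t}\, dt$ with respect to the parameter $z$, which gives $\Gamma'(z) = \int_{0}^\infty t^{z-1}\ln(t)\, e^{-t}\, dt$; setting $z = k+1$ identifies $\int_{0}^\infty t^{k}\ln(t)\, e^{-t}\, dt = \Gamma'(k+1)$. Combining the two terms yields $2^{k-1}\big(\Gamma(k+1)\ln 2 + \Gamma'(k+1)\big)$, as claimed.

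There is no genuine obstacle in this computation; the only step meriting a word of justification is the differentiation under the integral sign used to obtain $\Gamma'(z)$, which is legitimate because the integral $\int_{0}^\infty t^{z-1}\ln(t)\, e^{-t}\, dt$ converges uniformly for $z$ in any compact subset of $(0,\infty)$, so the parameter-derivative may be passed inside the integral by the standard dominated-convergence argument. Everything else is bookkeeping of the exponents in the substitution, which I would carry out carefully but expect to present compactly.
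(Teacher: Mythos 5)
Your proof is correct and follows essentially the same route as the paper: the substitution $u = x^2/2$, splitting $\ln x = \tfrac{1}{2}(\ln 2 + \ln u)$, and identifying the two resulting integrals as $\Gamma(k+1)$ and $\Gamma'(k+1)$. The only cosmetic difference is that the paper cites a table entry \cite[4.352 (4)]{Gradshtein2007} for $\int_{0}^\infty e^{-u} u^{k}\ln(u)\, du = \Gamma'(k+1)$, whereas you derive it by differentiating the Gamma integral under the integral sign, which is a perfectly sound (and self-contained) justification of the same fact.
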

\begin{proof}
Consider
	\begin{align}
	  \mathcal{I}_k
	  &\stackrel{\Delta}{=} \int_{0}^\infty {e^{- \frac{x^2}{2} } x^{2k+1} \ln(x) dx} \nonumber \\
		&\stackrel{(a)}{=} \int_{0}^\infty {e^{- u }(\sqrt{2u})^{2k+1} \ln(\sqrt{2u}) \frac{1}{\sqrt{2u} } du} \nonumber \\
		&= \int_{0}^\infty {e^{- u }(2u)^{k} \frac{1}{2} \left(\ln(2) + \ln(u)\right)  du} \nonumber \\
		&= 2^{k-1} \ln(2) \int_{0}^\infty {e^{-u} u^{k} du} 
		 + 2^{k-1} \int_{0}^\infty {e^{- u }u^{k} \ln(u) du} \nonumber \\
		&\stackrel{(b)}{=} 2^{k-1} \left(  \Gamma(k+1) \ln(2) + \Gamma^\prime(k+1) \right) 
	\end{align}
	where ($a$) follows from the transformation of variables $u = x^2/2$
	and $(b)$ follows by (\ref{eq:gamma-func-def}) and \cite[4.352 (4)]{Gradshtein2007}
	\begin{align}
		\int_{u=0}^\infty {e^{- u }u^{k} \ln(u) du} = \Gamma^\prime(k+1).
	\end{align}
\end{proof}

\begin{lemma} \label{lemma:integral-lemma2}
	\begin{align}
		\sum_{k=0}^\infty \frac{t^k}{k!} \psi(k+1) = e^t ( \Gamma(0,t)+\ln(t) )
	\end{align}
\end{lemma}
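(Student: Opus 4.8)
The plan is to reduce the claim to a classical integral identity for the exponential integral. First I would use the value of the digamma function at positive integers,
\[
\psi(k+1) = -\gamma + H_k, \qquad H_k = \sum_{m=1}^{k} \frac{1}{m},
\]
where $\gamma$ is Euler's constant and $H_0 = 0$; this follows from $\psi(1) = -\gamma$ together with the recurrence $\psi(z+1) = \psi(z) + 1/z$ implied by the definition of $\psi$. Splitting the series then gives
\[
\sum_{k=0}^{\infty} \frac{t^k}{k!}\,\psi(k+1) = -\gamma\, e^t + \sum_{k=0}^{\infty} \frac{H_k}{k!}\, t^k,
\]
so the task reduces to evaluating the harmonic generating function $\sum_{k\ge 0} H_k t^k/k!$.

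Second, I would represent the harmonic numbers by $H_k = \int_0^1 (1-u^k)/(1-u)\,du$ and interchange summation and integration. For $t>0$ and $u\in[0,1]$ every term is nonnegative, so Tonelli's theorem justifies the swap, and summing $\sum_{k\ge 0} t^k(1-u^k)/k! = e^t - e^{ut}$ yields
\[
\sum_{k=0}^{\infty} \frac{H_k}{k!}\, t^k = \int_0^1 \frac{e^t - e^{ut}}{1-u}\,du.
\]
The single substitution $w = (1-u)t$, under which $1-u = w/t$ and $ut = t-w$, collapses this to $e^t \int_0^t (1-e^{-w})/w\,dw$.

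Third, I would identify the remaining integral, the function $\mathrm{Ein}(t) = \int_0^t (1-e^{-w})/w\,dw$, with $\gamma + \ln t + \Gamma(0,t)$, where $\Gamma(0,t) = \int_t^\infty e^{-w}/w\,dw$ is the $a=0$ instance of the upper incomplete Gamma function of (\ref{eq:upper-incomplete-gamma}). Splitting $\mathrm{Ein}$ at $w=1$, writing $\int_1^t dw/w = \ln t$ and $\int_1^t e^{-w}/w\,dw = \int_1^\infty e^{-w}/w\,dw - \Gamma(0,t)$, and invoking the integral representation $\gamma = \int_0^1 (1-e^{-w})/w\,dw - \int_1^\infty e^{-w}/w\,dw$ of Euler's constant produces $\mathrm{Ein}(t) = \gamma + \ln t + \Gamma(0,t)$. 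Combining the three steps gives $\sum_{k\ge 0} H_k t^k/k! = e^t\big(\gamma + \ln t + \Gamma(0,t)\big)$, and the leading $-\gamma e^t$ cancels the $\gamma e^t$, leaving $e^t\big(\Gamma(0,t)+\ln t\big)$ as claimed.

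The main obstacle is the bookkeeping in the third step: establishing $\mathrm{Ein}(t) = \gamma + \ln t + \Gamma(0,t)$ cleanly, since the pieces $\int_0^t dw/w$ and $\int_0^t e^{-w}/w\,dw$ each diverge at the origin and must be paired before the logarithmic singularity is extracted. An alternative that avoids the integral representation of $\gamma$ is to differentiate both sides in $t$ — each has derivative $(1-e^{-t})/t$ — and then fix the constant of integration from the limit $t\to 0^+$, using the small-argument expansion $\Gamma(0,t) = -\gamma - \ln t + o(1)$ so that both sides vanish there.
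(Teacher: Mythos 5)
Your proof is correct, but it reaches the key integral by a genuinely different route than the paper. The paper's proof starts from a cited table identity (Brychkov 6.2.1 (60)) expressing $\sum_{k\ge 1} t^k\psi(k+a)/k!$ through the generalized hypergeometric function ${}_2F_2$, specializes to $a=1$, identifies the resulting power series $F(t)=t\cdot{}_2F_2(1,1;2,2;-t)=\sum_{m\ge 1}\frac{-1}{m}\frac{(-t)^m}{m!}$, observes $F'(t)=(1-e^{-t})/t$, and then integrates. You instead use the elementary evaluation $\psi(k+1)=H_k-\gamma$, pull out $-\gamma e^t$, and evaluate the harmonic generating function $\sum_k H_k t^k/k!$ via the representation $H_k=\int_0^1 (1-u^k)/(1-u)\,du$, a Tonelli interchange (valid since $t>0$), and the substitution $w=(1-u)t$, arriving at $e^t\int_0^t (1-e^{-w})/w\,dw$. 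From that point on the two arguments coincide: both establish $\int_0^t (1-e^{-w})/w\,dw=\gamma+\ln t+\Gamma(0,t)$ by splitting the integral and invoking the same integral representation of Euler's constant that the paper cites as \cite[8.367 (12)]{Gradshtein2007}, with the same cancellation of $\gamma$ at the end (in the paper via $\psi(1)=-\gamma$, in yours via the explicit $-\gamma e^t$ term). What your route buys is self-containedness: it needs no special-function tables and no ${}_2F_2$ machinery, only the digamma recurrence and Fubini--Tonelli. What the paper's route buys is brevity given the citation, and the cited formula holds for a general shift $\psi(k+a)$, though that generality is never used. Your fallback for fixing the constant --- differentiating both sides of the $\mathrm{Ein}$ identity and using $\Gamma(0,t)=-\gamma-\ln t+o(1)$ as $t\to 0^+$ --- is also sound and arguably cleaner than juggling the individually divergent pieces.
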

\begin{proof}	
	We use the following formula \cite[6.2.1 (60)]{Brychkov2008} 
	\begin{align}
		\sum_{k=1}^\infty \frac{t^k}{k!} \psi(k+a) 
		&= \frac{t}{a} e^t \bigg[ a \psi(a) \frac{1-e^{-t}}{t} \nonumber\\&\qquad
															+ ~ _2F_2(1,1;a+1,2;-t) \bigg]
		\label{eq:psi_series_a}
	\end{align}
	where $_2F_2(a_1,a_2;b_1,b_2;x)$ is the generalized hypergeometric function defined as
	\cite[9.14 (1)]{Gradshtein2007},\cite[p. 674]{Brychkov2008}
	\begin{align}
		_2F_2(a_1,a_2;b_1,b_2;x) = \sum_{k=0}^\infty \frac{(a_1)_k (a_2)_k}{(b_1)_k (b_2)_k} \frac{x^k}{k!}
		\label{eq:hypergeometric_2F2}
	\end{align}
	where $(f)_k \stackrel{\Delta}{=} f(f+1)\ldots (f+k-1)$.	
	Setting $a=1$ in (\ref{eq:psi_series_a}) gives
	\begin{align}
		\sum_{k=1}^\infty \frac{t^k}{k!} \psi(k+1) 
		& = t e^t \left[ \psi(1) \frac{1-e^{-t}}{t} + ~ _2F_2(1,1;2,2;-t) \right] \nonumber \\
		& = e^t \left[ F(t) + (1-e^{-t}) \psi(1) \right]
		\label{eq:psi_series_1}
	\end{align}
	where we defined $F(t)$ as
	\begin{align}
		F(t) 
		&\stackrel{\Delta}{=} t \cdot {_2F_2}(1,1;2,2;-t) \nonumber \\
		&= t ~ \sum_{k=0}^\infty \frac{1}{(k+1)^2} \frac{(-t)^k}{k!} \nonumber \\
		&= \sum_{m=1}^\infty \frac{-1}{m} \frac{(-t)^{m}}{m!}.
		\label{eq:Ft_def}
	\end{align}
	From the Fundamental Theorem of Calculus, we have
	\begin{align}
		\int_{0}^{z} F^\prime(t) ~ dt =	F(z) - F(0)
		\label{eq:theorem_of_calculus}
	\end{align}
	where
	\begin{align}
		F^\prime(t) 
		\stackrel{\Delta}{=} \frac{dF(t)}{dt}
		= \sum_{m=1}^\infty \frac{(-t)^{m-1}}{m!}
		= \frac{1-e^{-t}}{t}
	\end{align}
	and therefore the left-hand side of (\ref{eq:theorem_of_calculus}) is 
	\begin{align}
		\int_{0}^{z} \frac{1-e^{-t}}{t} dt
		&=
		  \int_{z}^{\infty} \frac{e^{-t}}{t} dt
		+ \int_{1}^{z} \frac{1}{t} dt		
		\nonumber\\& \qquad
		+
		\left(
		  \int_{0}^{1} \frac{1-e^{-t}}{t} dt
		- \int_{1}^{\infty} \frac{e^{-t}}{t} dt
		\right) \nonumber \\
		&= \Gamma(0,z) + \ln(z) + \gamma
	\end{align}
	where we used the definition of the upper incomplete Gamma function
	and the following integral form for Euler's constant 
	\cite[8.367 (12)]{Gradshtein2007}
	\begin{align}
		\gamma = \int_0^1 \frac{1-e^{-t}}{t} dt - \int_1^\infty \frac{e^{-t}}{t} dt.
	\end{align}
	Since $F(0)=0$ and $\psi(1) = -\gamma$ \cite[8.367 (1)]{Gradshtein2007}, we have
	\begin{align}
		F(z) = \Gamma(0,z) + \ln(z) - \psi(1).
		\label{eq:Fz_close_form}
	\end{align}
	The lemma follows from (\ref{eq:psi_series_1}) and (\ref{eq:Fz_close_form}).
\end{proof}

\begin{lemma} \label{lemma:integral-lemma3}
	\begin{align}
	\int_{0}^\infty {x e^{- \frac{x^2+\nu^2}{2} } I_0(x\nu) \ln(x) dx} 
	= \frac{1}{2} \left( \Gamma\left( 0,\frac{\nu^2}{2} \right) + \ln(\nu^2) \right)
	\end{align}
\end{lemma}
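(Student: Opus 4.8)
The plan is to expand the Bessel function in its defining power series and integrate term by term, reducing the claim to the two preceding lemmas. Recall that $I_0(z) = \sum_{k=0}^\infty (z/2)^{2k}/(k!)^2$. Substituting $z=x\nu$ and pulling the constant factor $e^{-\nu^2/2}$ out of the integral, I would rewrite the left-hand side as $e^{-\nu^2/2}\sum_{k=0}^\infty \frac{\nu^{2k}}{4^k (k!)^2}\int_0^\infty x^{2k+1}e^{-x^2/2}\ln x\,dx$, provided the interchange of summation and integration is justified (see below).

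The inner integral is exactly the quantity evaluated in Lemma \ref{lemma:integral-lemma1}, namely $2^{k-1}\bigl(\Gamma(k+1)\ln 2 + \Gamma^\prime(k+1)\bigr)$. Using $\Gamma(k+1)=k!$ together with the identity $\Gamma^\prime(k+1)=\Gamma(k+1)\psi(k+1)=k!\,\psi(k+1)$, which follows immediately from the definition $\psi(z)=\Gamma^\prime(z)/\Gamma(z)$ in Appendix B, each term becomes $2^{k-1}k!\bigl(\ln 2 + \psi(k+1)\bigr)$. Cancelling one factor $k!$ against the $(k!)^2$ in the denominator and simplifying $2^{k-1}/4^k=\tfrac12\,2^{-k}$, the series collapses to $\tfrac12\,e^{-\nu^2/2}\sum_{k=0}^\infty \frac{(\nu^2/2)^k}{k!}\bigl(\ln 2 + \psi(k+1)\bigr)$.

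Setting $t=\nu^2/2$, I would split this into two pieces. The $\ln 2$ piece sums to $\ln 2\cdot e^{t}$ by the exponential series $\sum_k t^k/k!=e^t$, while the $\psi(k+1)$ piece is precisely the content of Lemma \ref{lemma:integral-lemma2}, equal to $e^{t}\bigl(\Gamma(0,t)+\ln t\bigr)$. Multiplying the bracket by $\tfrac12 e^{-t}$ cancels the $e^{t}$ factors and leaves $\tfrac12\bigl(\ln 2 + \Gamma(0,t)+\ln t\bigr)$. Finally $\ln 2 + \ln t = \ln(2t) = \ln(\nu^2)$ and $\Gamma(0,t)=\Gamma(0,\nu^2/2)$, which yields the claimed value $\tfrac12\bigl(\Gamma(0,\nu^2/2)+\ln(\nu^2)\bigr)$.

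The only nonroutine point is justifying the term-by-term integration, which I expect to be the main (if minor) obstacle since $\ln x$ changes sign at $x=1$. I would handle this by verifying absolute convergence: the integrand's factors are all nonnegative apart from $\ln x$, so it suffices to check $\int_0^\infty x\,e^{-x^2/2}I_0(x\nu)\,|\ln x|\,dx<\infty$ and then apply Tonelli's theorem to exchange sum and integral. On $[1,\infty)$ this is finite because the nonnegative-coefficient series converges and the Gaussian factor dominates the polynomial growth; on $(0,1)$ one bounds $I_0(x\nu)\le e^{x\nu}\le e^{\nu}$ (from $\cos\theta\le 1$ in the integral representation of $I_0$) and notes that $\int_0^1 x\,|\ln x|\,dx<\infty$. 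This legitimizes the interchange and completes the argument.
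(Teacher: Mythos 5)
Your proposal is correct and follows essentially the same route as the paper's proof: expand $I_0(x\nu)$ in its power series, integrate term by term via Lemma \ref{lemma:integral-lemma1}, rewrite $\Gamma^\prime(k+1)$ as $k!\,\psi(k+1)$, and sum the resulting series using the exponential series and Lemma \ref{lemma:integral-lemma2}. The only difference is that you explicitly justify the interchange of summation and integration (which the paper does silently); that added care is sound and only strengthens the argument.
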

\begin{proof} 
We compute
\begin{align}
	&   \int_{0}^\infty {x e^{- \frac{x^2+\nu^2}{2} } I_0(x\nu) \ln(x) dx} \nonumber \\
	& \stackrel{(a)}{=} 
		  \int_{0}^\infty {x e^{- \frac{x^2+\nu^2}{2} } \left( \sum_{k=0}^{\infty} \frac{(x^2 \nu^2/4)^k}{(k!)^2} \right) \ln(x) dx}	\nonumber\\
	& = \sum_{k=0}^{\infty} \frac{1}{4^k (k!)^2} 
			\left( \int_{0}^\infty {x e^{- \frac{x^2+\nu^2}{2} }(x \nu)^{2k} \ln(x) dx} \right) \nonumber\\
	& \stackrel{(b)}{=} 
	    \sum_{k=0}^{\infty} \frac{1}{4^k (k!)^2} 
			2^{k-1} e^{- \frac{\nu^2}{2} } \nu^{2k} ( \Gamma(k+1) \ln(2) + \Gamma^\prime(k+1) )  \nonumber\\
	& \stackrel{(c)}{=} 
			e^{ -\frac{\nu^2}{2} } \left[
			\frac{\ln(2)}{2} \sum_{k=0}^{\infty} \frac{(\nu^2/2)^k}{k!}
		+ \frac{1     }{2} \sum_{k=0}^{\infty} \frac{(\nu^2/2)^k}{k!} \psi(k+1)
			\right]			\nonumber \\			
	& \stackrel{(d)}{=} 
			e^{ -\frac{\nu^2}{2} } \left[ 
			\frac{\ln(2)}{2} e^{ \frac{\nu^2}{2} }
		+ \frac{1     }{2} e^{ \frac{\nu^2}{2} } \left( \Gamma\left( 0,\frac{\nu^2}{2} \right) + \ln\left( \frac{\nu^2}{2} \right) \right)
			\right] \nonumber \\			
	& = \frac{1}{2} \left( \Gamma\left( 0,\frac{\nu^2}{2} \right) + \ln(\nu^2) \right)
\end{align}
where in ($a$) we used the series representation of $I_0(\cdot)$
\cite[9.6.10]{Abramowitz1972}
\begin{align}
	I_0(x) = \sum_{k=0}^{\infty} \frac{(x^2/4)^k}{(k!)^2}.
\end{align}
Step
($b$) follows from Lemma \ref{lemma:integral-lemma1},
($c$) follows because \cite[6.1.6]{Abramowitz1972}
\begin{align}
\Gamma(k+1) = k!
\end{align}
and $\psi(k+1) = \Gamma^\prime(k+1)/\Gamma(k+1)$ and
($d$) follows by Lemma \ref{lemma:integral-lemma2}.
\end{proof}

\section{Minimum-Distance Estimator}
\label{sec:appendix-min-distance}

Let $Y=X+Z$ where $Z$ is a circularly-symmetric complex Gaussian random variable with mean $0$ and variance $N$.
Suppose $X_A \stackrel{\Delta}{=} |X| \in \mathcal{X}_A = \{\sqrt{\Pu_j}:j=1,\ldots,J\}$
where $0 < \Pu_1 < \Pu_2 <\ldots <\Pu_J$.
Define the minimum-distance estimator $\hat{X}_A$ as
\begin{align}
	\hat{X}_A = \arg \min_{x_A \in \mathcal{X}_A} |Y_A-x_A|
\end{align}
where $Y_A=|Y|$.
\begin{lemma} \label{lemma:min-distance-estimator}
The probability of error for uniformly distributed $X_A$ satisfies
\begin{align}
	P_e
	\stackrel{\Delta}{=} \Pr[\hat{X}_A \ne X_A]
	\leq \frac{2}{J} \sum_{j=2}^{J} \exp\left(-\frac{\Delta_j^2}{4}\right)
\end{align}
where $\Delta_j = (\sqrt{\Pu_j} - \sqrt{\Pu_{j-1}})/\sqrt{N}$.
\end{lemma}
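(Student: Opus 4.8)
The plan is to condition on the transmitted ring and to reduce each conditional error event to a tail event for the noise magnitude $|Z|$. Since $X_A$ is uniform on $\mathcal{X}_A$, I would start from
\begin{align}
	P_e = \frac{1}{J} \sum_{j=1}^{J} \Pr\!\left[ \hat{X}_A \neq X_A \mid X_A = \sqrt{\Pu_j} \right].
\end{align}
Because the estimator is minimum-distance and the rings are ordered, its decision regions are the intervals cut out by the midpoints $b_j = (\sqrt{\Pu_j} + \sqrt{\Pu_{j+1}})/2$; given $X_A = \sqrt{\Pu_j}$, an error occurs exactly when $Y_A$ leaves $(b_{j-1}, b_j)$, with the conventions $b_0 = 0$ and $b_J = \infty$. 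A union bound over the two neighbouring boundaries then gives, for an interior ring, $\Pr[\text{error}\mid X_A=\sqrt{\Pu_j}] \leq \Pr[Y_A < b_{j-1}] + \Pr[Y_A > b_j]$, while for $j=1$ only the upper term survives (there is no boundary below $Y_A \geq 0$) and for $j=J$ only the lower term survives.

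The key step is to bound each crossing probability by a noise tail. I would use the reverse triangle inequality $\big| Y_A - \sqrt{\Pu_j} \big| = \big| |Y| - |X| \big| \leq |Y - X| = |Z|$, valid because $Y = X + Z$ with $|X| = \sqrt{\Pu_j}$. Hence $\{Y_A < b_{j-1}\}$ forces $|Z| \geq (\sqrt{\Pu_j} - \sqrt{\Pu_{j-1}})/2$ and $\{Y_A > b_j\}$ forces $|Z| \geq (\sqrt{\Pu_{j+1}} - \sqrt{\Pu_j})/2$. Since the real and imaginary parts of $Z$ each have variance $N/2$, the magnitude-squared $|Z|^2$ is exponential with mean $N$, so $\Pr[|Z| \geq a] = \Pr[|Z|^2 \geq a^2] = e^{-a^2/N}$. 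Substituting $a = (\sqrt{\Pu_j} - \sqrt{\Pu_{j-1}})/2$ and recalling $\Delta_j = (\sqrt{\Pu_j} - \sqrt{\Pu_{j-1}})/\sqrt{N}$ gives $\Pr[Y_A < b_{j-1}] \leq \exp(-\Delta_j^2/4)$, and likewise $\Pr[Y_A > b_j] \leq \exp(-\Delta_{j+1}^2/4)$.

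Assembling these conditional bounds, the lower-crossing terms contribute $\sum_{j=2}^{J} \exp(-\Delta_j^2/4)$ and the upper-crossing terms contribute $\sum_{j=1}^{J-1} \exp(-\Delta_{j+1}^2/4)$; reindexing the latter by $j \mapsto j-1$ shows it equals the same sum, so the two combine into the factor $2$ of the claimed bound $\tfrac{2}{J}\sum_{j=2}^{J}\exp(-\Delta_j^2/4)$. I expect the analysis itself to be routine — the single genuine inequality is the reverse triangle inequality, which is precisely what lets us dodge integrating the Rician density of $Y_A$ directly — and the only points requiring care to be bookkeeping: correctly dropping the absent boundary term for the extreme rings $j=1$ and $j=J$, and carrying out the reindexing that produces the factor $2$ rather than an off-by-one in the summation limits.
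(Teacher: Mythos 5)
Your proof is correct, and it reaches the paper's bound by a genuinely different route at the one step that matters. The skeleton is identical to the paper's: condition on the transmitted ring, note that the minimum-distance regions are the midpoint intervals, bound the two boundary-crossing probabilities separately, drop the absent boundary for $j=1$ and $j=J$, and reindex to get the factor $2$. Where you differ is in how each crossing probability is bounded. The paper observes that, conditioned on $X_A=\sqrt{\Pu_j}$, the envelope $Y_A$ is Rician, writes each crossing probability exactly as a Marcum Q-function value, and then invokes two bounds on $Q(a,b)$ cited from the literature (the bounds \eqref{eq:marcum_q_ub} and \eqref{eq:marcum_q_lb_loosened}) to obtain $\exp(-\Delta_j^2/4)$ per term. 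You instead bypass the Rician density entirely: the reverse triangle inequality $\bigl||Y|-|X|\bigr|\le|Z|$ converts any crossing event into the event $\{|Z|\ge \text{half-gap}\}$, and since $|Z|^2$ is exponential with mean $N$ this tail is exactly $\exp(-\Delta_j^2/4)$ --- the same per-term bound, with no special functions and no external citations. Your argument is thus more elementary and self-contained; the paper's argument records the exact crossing probabilities (as Marcum Q values) along the way, which would permit tighter or numerical evaluation if one wanted more than the exponential bound, but for the purpose of this lemma the two approaches are equally sharp.
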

\begin{proof}
Let $P_{e,j}$ be the error probability when $X_A=\sqrt{\Pu_j}$. We have $P_e=\sum_{j=1}^J (1/J) P_{e,j}$ and
\begin{align}
	P_{e,j} & = \left\{ \begin{array}{ll}
	\Pr\left(Y_A \ge \frac{\sqrt{\Pu_1}+\sqrt{\Pu_{2}}}{2} \right), & j=1 \\
	 \Pr\left(Y_A \le \frac{\sqrt{\Pu_{K-1}}+\sqrt{\Pu_K}}{2} \right), & j=J \\
	 \Pr\left(Y_A \le \frac{\sqrt{\Pu_{j-1}}+\sqrt{\Pu_j}}{2} \right) & \\
	\quad + \Pr\left(Y_A \ge \frac{\sqrt{\Pu_j}+\sqrt{\Pu_{j+1}}}{2} \right), & \text{otherwise}.
	\end{array} \right.
\end{align}
Conditioned on $X_A=\sqrt{\Pu_j}$, $Y_A$ is a Ricean random variable, and hence we compute \cite[p. 50]{ProakisSalehi2008}
\begin{align} \label{eq:interval-error}
	\Pr\left(Y_A \ge \frac{\sqrt{\Pu_j}+\sqrt{\Pu_{j+1}}}{2} \right)
	= Q\left(\frac{\sqrt{\Pu_j}}{\sqrt{N/2}},\frac{\sqrt{\Pu_j}+\sqrt{\Pu_{j+1}}}{2 \sqrt{{N/2}} }\right)
\end{align}
where $Q(a,b)$ is the Marcum Q-function \cite{Corazza2002}.
Consider the following bounds.
\begin{itemize}
\item Upper bound for $b>a$ \cite[UB1MG]{Corazza2002}
\begin{align}
	Q(a,b) \le \exp\left(-\frac{(b-a)^2}{2}\right) .
	\label{eq:marcum_q_ub}
\end{align}

\item Lower bound for $b<a$ \cite[LB2aS]{Corazza2002}
\begin{align}
	& Q(a,b) \nonumber \\
	& \ge 1 - \frac{1}{2} \left[ \exp\left(-\frac{(a-b)^2}{2}\right) - \exp\left(-\frac{(a+b)^2}{2}\right) \right] .
	\label{eq:marcum_q_lb}
\end{align}
\end{itemize}
The bound (\ref{eq:marcum_q_lb}) implies
\begin{align}
	1 - Q(a,b) \le \exp\left(-\frac{(a-b)^2}{2}\right) .
	\label{eq:marcum_q_lb_loosened}
\end{align}

We use \eqref{eq:interval-error} and \eqref{eq:marcum_q_ub}  to write
\begin{align}
	\Pr\left(Y_A \ge \frac{\sqrt{\Pu_j}+\sqrt{\Pu_{j+1}}}{2} \right)\
	\le \exp\left(-\frac{\Delta_{j+1}^2}{4}\right)	.
\end{align}
where $\Delta_j = (\sqrt{\Pu_j} - \sqrt{\Pu_{j-1}})/\sqrt{N}$. Similarly, we use inequality \eqref{eq:marcum_q_lb_loosened} to write
\begin{align}
	\Pr\left(Y_A \le \frac{\sqrt{\Pu_{j-1}}+\sqrt{\Pu_j}}{2} \right)
	& \le \exp\left(-\frac{\Delta_j^2}{4}\right) .
\end{align}
Collecting our results, we have
\begin{align}
	P_e 
	& \leq \frac{1}{J} \left[ 
	\exp\left(-\frac{\Delta_2^2}{4}\right)	
	+ \sum_{j=2}^{J-1} \exp\left(-\frac{\Delta_j^2}{4}\right)			
	\right. \nonumber \\ & \left.
	\qquad + \sum_{j=2}^{J-1} \exp\left(-\frac{\Delta_{j+1}^2}{4}\right)	
	+ \exp\left(-\frac{\Delta_J^2}{4}\right)
	\right] \nonumber \\
	& = \frac{2}{J} \sum_{j=2}^{J} \exp\left(-\frac{\Delta_j^2}{4}\right).
	\label{eq:Pe-bound}
\end{align}

\end{proof}

\section{Orthogonality of Impulse Responses}
\label{sec:ortho_impulses}
We introduce a useful lemma.
\begin{lemma}
\label{lemma:leibniz}
For any complex number $B$, we have
\begin{align}
 \int_{0}^{T_s} \frac{|p(t)|^2}{E_s} e^{B \K(\tau)} d\tau
 	=  \left\{
 	\begin{array}{ll}
 	(e^B-1)/B, 	&	\text{if } B \neq 0	,\\
 	1,					&	\text{if } B   =  0
  \end{array} 
  \right.
\end{align}
where $K(\cdot)$ is defined in (\ref{eq:rise_func}).
\end{lemma}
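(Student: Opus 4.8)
The plan is to recognize the integrand as a perfect differential and thereby collapse the integral to an elementary one via a change of variables. First I would observe that the rise function $\K$ defined in \eqref{eq:rise_func} is, by the fundamental theorem of calculus, an antiderivative of the normalized instantaneous power, so that
\begin{align}
  \frac{d}{d\tau}\K(\tau) = \frac{|p(\tau)|^2}{E_s}.
\end{align}
Consequently the combination $\frac{|p(\tau)|^2}{E_s}\,d\tau$ appearing in the integrand is precisely $d\K(\tau)$, which motivates the substitution $u=\K(\tau)$.

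Next I would determine the new limits of integration. Equation \eqref{eq:rise_func} gives $\K(0)=0$ directly, and the pulse-energy normalization \eqref{eq:nonzerogvm-pulse-energy} gives $\K(T_s)=\frac{1}{E_s}\int_0^{T_s}|p(\lambda)|^2\,d\lambda=1$. With these endpoints the substitution yields
\begin{align}
  \int_{0}^{T_s}\frac{|p(\tau)|^2}{E_s}\,e^{B\K(\tau)}\,d\tau
  = \int_{0}^{1} e^{Bu}\,du.
\end{align}
I would justify this step not by inverting $\K$ but directly through the chain rule, noting that $\frac{d}{d\tau}e^{B\K(\tau)}=B\,\frac{|p(\tau)|^2}{E_s}\,e^{B\K(\tau)}$; this reading avoids any need to assume that $\K$ (equivalently $p$) is strictly monotone.

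Finally I would evaluate the reduced integral in the two regimes. For $B\neq 0$ the antiderivative $e^{Bu}/B$ gives $\int_0^1 e^{Bu}\,du=(e^B-1)/B$, while for $B=0$ the integrand is identically $1$, so the integral equals $1$; the latter is also the limit of $(e^B-1)/B$ as $B\to 0$, so the two cases join continuously. I do not anticipate any genuine obstacle here: the argument is essentially a one-line $u$-substitution, and the only point deserving care is the justification of that substitution via the chain rule rather than by treating $\K$ as invertible.
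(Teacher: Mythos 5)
Your proposal is correct and follows essentially the same route as the paper: the paper likewise uses the fundamental theorem of calculus (cited as the Leibniz rule) to identify $\K^\prime(\tau) = |p(\tau)|^2/E_s$, rewrites the integrand for $B \neq 0$ as $\frac{1}{B}\frac{d}{d\tau}e^{B\K(\tau)}$, evaluates at the endpoints using $\K(0)=0$ and $\K(T_s)=1$, and treats $B=0$ separately via the energy normalization \eqref{eq:nonzerogvm-pulse-energy}. Your chain-rule justification of the substitution, which avoids assuming $\K$ is invertible, is exactly the paper's ``integration by substitution'' step made explicit.
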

\begin{proof}
\begin{itemize}
\item 
For $B = 0$, we have
\begin{align}
  \int_{0}^{T_s} \frac{|p(t)|^2}{E_s}  e^{B \K(\tau)} d\tau 
  = \int_{0}^{T_s}  \frac{|p(t)|^2}{E_s}  d\tau
  = 1
\end{align}
where the last equality follows from (\ref{eq:nonzerogvm-pulse-energy}).

\item
For $B \neq 0$, we have
\begin{align}
	\int_{0}^{T_s} \frac{|p(t)|^2}{E_s} e^{B \K(\tau)} d\tau 
	&\stackrel{(a)}{=} \frac{1}{B} \int_{0}^{T_s} B \K^\prime(\tau) e^{B \K(\tau)} d\tau \nonumber\\
	&\stackrel{(b)}{=} \frac{e^{B \K(T_s)} -  e^{B \K(0)}}{B} \nonumber\\
	&\stackrel{(c)}{=} \frac{e^{B} - 1}{B}
\end{align}
where
($a$) follows by applying Leibniz's theorem for differentiation of an integral\cite[3.3.7]{Abramowitz1972}:
\begin{align}
	\K^\prime(t) = \frac{1}{E_s} \frac{d}{d t} \int_0^t |p(\lambda)|^2 d\lambda = \frac{1}{E_s} |p(t)|^2;
\end{align}
($b$) is obtained through integration by substitution and
($c$) holds because $\K(0)=0$ and $\K(T_s)=1$.
\end{itemize}
\end{proof}

Next we show that the impulse responses 
$\{h_{\idx}(\T)\}_{\idx \in \mathcal{F}_1}$ are orthogonal (cf. (\ref{eq:ortho_diff})).
For $\idx_1 \neq \idx_2$, we have
\begin{align}
	\ifdefined\twocolumnmode & \fi
	\int_{-\infty}^{\infty} h_{\idx_1}(\xi) h^*_{\idx_2}(\xi) d\xi
	\ifdefined\twocolumnmode \nonumber\\ \fi
	&\stackrel{(a)}{=} \int_{-\infty}^{\infty} |p(-\xi)|^2 \exp(-i 2\pi (\idx_1 - \idx_2) \K(-\xi)) \ d\xi \nonumber \\
	&\stackrel{(b)}{=} \int_{-\infty}^{\infty} |p(t)|^2 \exp(-i 2\pi (\idx_1 - \idx_2) \K(t)) 	\ dt \nonumber \\
	&\stackrel{(c)}{=} \int_{0}^{\Ts} |p(t)|^2 \exp(-i 2\pi (\idx_1 - \idx_2) \K(t)) 	\ dt \nonumber \\
	&\stackrel{(d)}{=} 0
\end{align}
where 
($a$) follows from $\K^*(\xi) = \K(\xi)$, 
($b$) follows from the transformation of variables $t = -\xi$, 
($c$) holds because $p(t)=0$ for $t \notin [0,\Ts]$ and
($d$) follows from Lemma \ref{lemma:leibniz} with $B=-i 2 \pi (\idx_1 - \idx_2) \neq 0$.

\section{Independence of Filter Outputs}
\label{sec:sufficient}
We drop the user index and time index for notational simplicity.
We decompose $\Y$ into $Y_V$ and
$\overline{\Y}_{V}$ where $\overline{\Y}_{V}$ = $(Y_q: q \in \mathcal{V} \backslash \{V\})$.
Let $\overline{\y} = (\tilde{y}_q: q=1,\ldots,|\mathcal{V}|-1) \in \mathbb{C}^{|\mathcal{V}|-1}$.
The joint pdf of $X$, $Y_V$ and $\overline{\Y}_{V}$ is
\ifdefined\twocolumnmode{
\begin{align}
&p_{X,Y_V,\overline{\Y}_{V}}(x,y,\bar{\y}) \nonumber\\
&= p_{X}(x) \sum_{v \in \mathcal{V}} p_{Y_V,\overline{\Y}_{V},V|X}(y,\bar{\y},v|x) \nonumber \\
&= p_{X}(x) \sum_{v \in \mathcal{V}} p_{V}(v) p_{Z}(y-x) \prod_{q=1}^{|\mathcal{V}|-1} p_{Z}(\tilde{y}_q) \nonumber \\
&= p_{X}(x) ~ p_{Z}(y-x) \prod_{q=1}^{|\mathcal{V}|-1} p_{Z}(\tilde{y}_q) \nonumber \\
&= p_{X}(x) ~ p_{Y_V|X}(y|x) ~ p_{\overline{\Y}_{V}}(\bar{\y}).
\end{align}
}\else{
\begin{align}
p_{X,Y_V,\overline{\Y}_{V}}(x,y,\bar{\y}) 
&= p_{X}(x) \sum_{v \in \mathcal{V}} p_{Y_V,\overline{\Y}_{V},V|X}(y,\bar{\y},v|x) \nonumber \\
&= p_{X}(x) \sum_{v \in \mathcal{V}} p_{V}(v) p_{Z}(y-x) \prod_{q=1}^{|\mathcal{V}|-1} p_{Z}(\tilde{y}_q) \nonumber \\
&= p_{X}(x) ~ p_{Z}(y-x) \prod_{q=1}^{|\mathcal{V}|-1} p_{Z}(\tilde{y}_q) \nonumber \\
&= p_{X}(x) ~ p_{Y_V|X}(y|x) ~ p_{\overline{\Y}_{V}}(\bar{\y}).
\end{align}
}\fi
where $p_{Z}(\cdot)$ is the pdf of $Z_f$ for $f \in \mathcal{V}$ and is given by
\begin{align}
p_Z(z) \stackrel{\Delta}{=} \frac{1}{\pi N} e^{-|z|^2/N}.
\end{align}
Therefore, we find that $X$ and $Y_V$ are statistically independent of $\overline{\Y}_{V}$ and
\begin{align}
I(X ; \overline{\Y}_{V}|Y_V) = 0.
\end{align}
Hence, we have
\begin{align}
I(X; \Y)
&= I(X ; Y_V) + I(X ; \overline{\Y}_{V}|Y_V) 
= I(X ; Y_V).
\end{align}

\section*{Acknowledgement}
We thank the reviewers and the Associate Editor for useful comments.

\appendices

\ifCLASSOPTIONcaptionsoff
  \newpage
\fi

\bibliographystyle{IEEEtran}
\bibliography{bibstrings_short,ieee_dissert4_short}

\begin{thebibliography}{10}
\providecommand{\url}[1]{#1}
\csname url@samestyle\endcsname
\providecommand{\newblock}{\relax}
\providecommand{\bibinfo}[2]{#2}
\providecommand{\BIBentrySTDinterwordspacing}{\spaceskip=0pt\relax}
\providecommand{\BIBentryALTinterwordstretchfactor}{4}
\providecommand{\BIBentryALTinterwordspacing}{\spaceskip=\fontdimen2\font plus
\BIBentryALTinterwordstretchfactor\fontdimen3\font minus
  \fontdimen4\font\relax}
\providecommand{\BIBforeignlanguage}[2]{{%
\expandafter\ifx\csname l@#1\endcsname\relax
\typeout{** WARNING: IEEEtran.bst: No hyphenation pattern has been}%
\typeout{** loaded for the language `#1'. Using the pattern for}%
\typeout{** the default language instead.}%
\else
\language=\csname l@#1\endcsname
\fi
#2}}
\providecommand{\BIBdecl}{\relax}
\BIBdecl

\bibitem{JLT2010}
R.-J. Essiambre, G.~Kramer, P.~J. Winzer, G.~Foschini, and B.~Goebel,
  ``Capacity limits of optical fiber networks,'' \emph{J. Lightw. Technol.},
  vol.~28, no.~4, pp. 662--701, Feb. 2010.

\bibitem{Splett1993}
A.~Splett, C.~Kurtzke, and K.~Petermann, ``Ultimate transmission capacity of
  amplified optical fiber communication systems taking into account fiber
  nonlinearities,'' in \emph{Euro. Conf. Opt. Commun. (ECOC)}, Montreux,
  Switzerland, Sep. 1993, pp. 41--44.

\bibitem{Narimanov2002}
\BIBentryALTinterwordspacing
E.~Narimanov and P.~Mitra, ``{The Channel Capacity of a Fiber Optics
  Communication System: Perturbation Theory},'' \emph{J. Lightw. Technol.},
  vol.~20, no.~3, pp. 530--537, Mar. 2002.
\BIBentrySTDinterwordspacing

\bibitem{Xiang2011Perturb}
L.~Xiang and X.~P. Zhang, ``The study of information capacity in multispan
  nonlinear optical fiber communication systems using a developed perturbation
  technique,'' \emph{J. Lightw. Technol.}, vol.~29, no.~3, pp. 260--264, Feb.
  2011.

\bibitem{Mecozzi94}
A.~Mecozzi, ``{Limits to long-haul coherent transmission set by the Kerr
  nonlinearity and noise of the in-line amplifiers},'' \emph{J. Lightw.
  Technol.}, vol.~12, no.~11, pp. 1993--2000, Nov. 1994.

\bibitem{TuritsynPRL2003}
\BIBentryALTinterwordspacing
K.~Turitsyn, S.~Derevyanko, I.~Yurkevich, and S.~Turitsyn, ``Information
  capacity of optical fiber channels with zero average dispersion,''
  \emph{Phys. Rev. Lett.}, vol.~91, p. 203901, Nov. 2003.
\BIBentrySTDinterwordspacing

\bibitem{Yousefi2010SumProd}
M.~Yousefi and F.~Kschischang, ``A probabilistic model for optical fiber
  channels with zero dispersion,'' in \emph{Biennial Symp. Commun. (QBSC)},
  Kingston, ON, Canada, May 2010, pp. 221--225.

\bibitem{Yousefi2010FokkerPlank}
------, ``{A Fokker-Planck differential equation approach for the
  zero-dispersion optical fiber channel},'' in \emph{IEEE Int. Symp. Inf.
  Theory (ISIT)}, Austin, TX, Jun. 2010, pp. 206--210.

\bibitem{PerSampleIT2011}
------, ``On the per-sample capacity of nondispersive optical fibers,''
  \emph{IEEE Trans. Inf. Theory}, vol.~57, no.~11, pp. 7522--7541, Nov. 2011.

\bibitem{WeiPlantArxiv2006}
\BIBentryALTinterwordspacing
H.~Wei and D.~Plant, ``{Comment on `Information capacity of optical fiber
  channels with zero average dispersion'},'' \emph{ArXiv Physics e-prints},
  2006.
\BIBentrySTDinterwordspacing

\bibitem{Mitra2001}
P.~P. Mitra and J.~B. Stark, ``{Nonlinear limits to the information capacity of
  optical fibre communications},'' \emph{Nature}, vol. 411, pp. 1027--1039,
  Jun. 2001.

\bibitem{Tang2001}
\BIBentryALTinterwordspacing
J.~Tang, ``{The Shannon channel capacity of dispersion-free nonlinear optical
  fiber transmission},'' \emph{J. Lightw. Technol.}, vol.~19, no.~8, pp. 1104
  -- 1109, Aug. 2001.
\BIBentrySTDinterwordspacing

\bibitem{DjordjevicJLT2005}
I.~Djordjevic, B.~Vasic, M.~Ivkovic, and I.~Gabitov, ``Achievable information
  rates for high-speed long-haul optical transmission,'' \emph{J. Lightw.
  Technol.}, vol.~23, no.~11, pp. 3755--3763, Nov. 2005.

\bibitem{IvakovicJLT2007}
M.~Ivkovic, I.~Djordjevic, and B.~Vasic, ``Calculation of achievable
  information rates of long-haul optical transmission systems using instanton
  approach,'' \emph{J. Lightw. Technol.}, vol.~25, no.~5, pp. 1163--1168, May
  2007.

\bibitem{Wegener2004}
\BIBentryALTinterwordspacing
L.~Wegener, M.~Povinelli, A.~Green, P.~Mitra, J.~Stark, and P.~Littlewood,
  ``{The effect of propagation nonlinearities on the information capacity of
  WDM optical fiber systems: cross-phase modulation and four-wave mixing},''
  \emph{Physica D: Nonlinear Phenomena}, vol. 189, no. 1-2, pp. 81--99, Feb.
  2004.
\BIBentrySTDinterwordspacing

\bibitem{HoKahnOFC2002}
K.-P. Ho and J.~Kahn, ``{Channel capacity of WDM systems using
  constant-intensity modulation formats},'' in \emph{Opt. Fiber Commun. Conf.
  (OFC)}, Anaheim, CA, USA, Mar. 2002, pp. 731--733.

\bibitem{TangMultispan2001}
J.~Tang, ``{The multispan effects of Kerr nonlinearity and amplifier noises on
  Shannon channel capacity of a dispersion-free nonlinear optical fiber},''
  \emph{J. Lightw. Technol.}, vol.~19, no.~8, pp. 1110--1115, Aug. 2001.

\bibitem{Tang2002}
------, ``{The channel capacity of a multispan DWDM system employing dispersive
  nonlinear optical fibers and an ideal coherent optical receiver},'' \emph{J.
  Lightw. Technol.}, vol.~20, no.~7, pp. 1095--1101, Jul. 2002.

\bibitem{Ped1997Volterra}
K.~Peddanarappagari and M.~Brandt-Pearce, ``Volterra series transfer function
  of single-mode fibers,'' \emph{J. Lightw. Technol.}, vol.~15, no.~12, pp.
  2232--2241, Dec. 1997.

\bibitem{Taghavi2006}
M.~Taghavi, G.~Papen, and P.~Siegel, ``{On the multiuser capacity of WDM in a
  nonlinear optical fiber: Coherent communication},'' \emph{IEEE Trans. Inf.
  Theory}, vol.~52, no.~11, pp. 5008--5022, Oct. 2006.

\bibitem{Bosco2011GN}
\BIBentryALTinterwordspacing
G.~Bosco, P.~Poggiolini, A.~Carena, V.~Curri, and F.~Forghieri, ``Analytical
  results on channel capacity in uncompensated optical links with coherent
  detection,'' \emph{Opt. Express}, vol.~19, no.~26, pp. B440--B451, Dec. 2011.
\BIBentrySTDinterwordspacing

\bibitem{Bosco2012GN:erratum}
\BIBentryALTinterwordspacing
------, ``Analytical results on channel capacity in uncompensated optical links
  with coherent detection: erratum,'' \emph{Opt. Express}, vol.~20, no.~17, pp.
  19\,610--19\,611, Aug. 2012.
\BIBentrySTDinterwordspacing

\bibitem{PoggioliniJLT2014GN}
P.~Poggiolini, ``{The GN model of non-linear propagation in uncompensated
  coherent optical systems},'' \emph{J. Lightw. Technol.}, vol.~30, no.~24, pp.
  3857--3879, Dec. 2012.

\bibitem{MecozziJLT2012}
A.~Mecozzi and R.~Essiambre, ``{Nonlinear Shannon limit in pseudolinear
  coherent systems},'' \emph{J. Lightw. Technol.}, vol.~30, no.~12, pp.
  2011--2024, Jun. 2012.

\bibitem{SecondiniJLT2013}
M.~Secondini, E.~Forestieri, and G.~Prati, ``{Achievable information rate in
  nonlinear WDM fiber-optic systems with arbitrary modulation formats and
  dispersion maps},'' \emph{J. Lightw. Technol.}, vol.~31, no.~23, pp.
  3839--3852, Dec. 2013.

\bibitem{DarAllerton2013}
R.~Dar, M.~Shtaif, and M.~Feder, ``{Information rates in the optical nonlinear
  phase-noise channel},'' in \emph{Annu. Allerton Conf. Commun. Control. and
  Comput.}, Oct. 2013.

\bibitem{DarLetters2014}
------, ``New bounds on the capacity of the nonlinear fiber-optic channel,''
  \emph{Opt. Lett.}, vol.~39, no.~2, pp. 398--401, Jan. 2014.

\bibitem{DarECOC2013}
------, ``Improved bounds on the nonlinear fiber-channel capacity,'' in
  \emph{Euro. Conf. Opt. Commun. (ECOC)}, London, UK, Sep. 2013, pp. 1--3.

\bibitem{AgrellJLT2014}
E.~Agrell, A.~Alvarado, G.~Durisi, and M.~Karlsson, ``Capacity of a nonlinear
  optical channel with finite memory,'' \emph{J. Lightw. Technol.}, vol.~32,
  no.~16, pp. 2862--2876, Aug. 2014.

\bibitem{YousefiISIT2013}
M.~Yousefi and F.~Kschischang, ``{Integrable communication channels and the
  nonlinear Fourier transform},'' in \emph{IEEE Int. Symp. Inf. Theory (ISIT)},
  Istanbul, Turkey, Jul. 2013, pp. 1705--1709.

\bibitem{YousefiISIT2013Fiber}
------, ``{Communication over fiber-optic channels using the nonlinear Fourier
  transform},'' in \emph{IEEE Int. Symp. Inf. Theory (ISIT)}, Istanbul, Turkey,
  Jul. 2013, pp. 1710--1714.

\bibitem{YousefiNFT1}
------, ``{Information transmission using the nonlinear Fourier transform, Part
  I: Mathematical tools},'' \emph{IEEE Trans. Inf. Theory}, vol.~60, no.~7, pp.
  4312--4328, Jul. 2014.

\bibitem{YousefiNFT2}
------, ``{Information transmission using the nonlinear Fourier transform, Part
  II: Numerical methods},'' \emph{IEEE Trans. Inf. Theory}, vol.~60, no.~7, pp.
  4329--4345, Jul. 2014.

\bibitem{YousefiNFT3}
------, ``{Information transmission using the nonlinear Fourier transform, Part
  III: Spectrum modulation},'' \emph{IEEE Trans. Inf. Theory}, vol.~60, no.~7,
  pp. 4346--4369, Jul. 2014.

\bibitem{GhozlanISIT2010}
H.~Ghozlan and G.~Kramer, ``Interference focusing for mitigating cross-phase
  modulation in a simplified optical fiber model,'' in \emph{IEEE Int. Symp.
  Inf. Theory (ISIT)}, Austin, TX, Jun. 2010, pp. 2033--2037.

\bibitem{GhozlanGVM2011}
------, ``Interference focusing for simplified optical fiber models with
  dispersion,'' in \emph{IEEE Int. Symp. Inf. Theory (ISIT)}, St. Petersburg,
  Russia, Aug. 2011, pp. 376--379.

\bibitem{Agrawal}
G.~P. Agrawal, \emph{Nonlinear Fiber Optics}, 3rd~ed.\hskip 1em plus 0.5em
  minus 0.4em\relax Academic Press, 2001.

\bibitem{GhozlanPHD}
H.~Ghozlan, \emph{Models and information rates for channels with nonlinearity
  and phase noise}.\hskip 1em plus 0.5em minus 0.4em\relax Ph.D. Dissertation,
  Dept. Elect. Eng., Univ. Southern California, Los Angeles, CA, USA, 2015.

\bibitem{LapidothPhaseNoise2002}
\BIBentryALTinterwordspacing
A.~Lapidoth, ``Capacity bounds via duality: A phase noise example,'' in
  \emph{Asian-Euro. Workshop Inf. Theory}, Breisach, Germany, Jun. 2002, pp.
  58--61.
\BIBentrySTDinterwordspacing

\bibitem{Essiambre2008PRL}
R.-J. Essiambre, G.~Foschini, G.~Kramer, and P.~Winzer, ``{Capacity limits of
  information transport in fiber-optic networks},'' \emph{Phys. Rev. Lett.},
  vol. 101, p. paper 163901, Oct. 2008.

\bibitem{Essiambre2009}
R.-J. Essiambre, G.~Foschini, P.~Winzer, and G.~Kramer, ``{Capacity limits of
  fiber-optic communication systems},'' in \emph{Opt. Fiber Commun. Conf.
  (OFC)}, San Diego, CA, USA, Mar. 2009, p. OThL1.

\bibitem{CJ08}
V.~R. Cadambe and S.~A. Jafar, ``{Interference alignment and degrees of freedom
  of the $K$-user interference channel},'' \emph{IEEE Trans. Inf. Theory},
  vol.~54, no.~8, pp. 3425--3441, Aug. 2008.

\bibitem{MotahariRealIA}
A.~S. Motahari, S.~Oveis-Gharan, M.~A. Maddah-Ali, and A.~K. Khandani, ``{Real
  interference alignment: Exploiting the potential of single antenna
  systems},'' \emph{IEEE Trans. Inf. Theory}, vol.~60, no.~8, pp. 4799--4810,
  Aug. 2014.

\bibitem{Abramowitz1972}
\BIBentryALTinterwordspacing
M.~Abramowitz and I.~A. Stegun, \emph{Handbook of Mathematical Functions with
  Formulas, Graphs, and Mathematical Tables}.\hskip 1em plus 0.5em minus
  0.4em\relax New York, 1972.
\BIBentrySTDinterwordspacing

\bibitem{Gradshtein2007}
I.~Gradshtein and I.~Ryzhik, \emph{Table of integrals, series and
  products}.\hskip 1em plus 0.5em minus 0.4em\relax Academic Press, 2007.

\bibitem{Brychkov2008}
Y.~A. Brychkov, \emph{Handbook of special functions : derivatives, integrals,
  series and other formulas}, 7th~ed.\hskip 1em plus 0.5em minus 0.4em\relax
  CRC Press, 2008.

\bibitem{ProakisSalehi2008}
J.~G. Proakis and M.~Salehi, \emph{Digital Communications}, 5th~ed.\hskip 1em
  plus 0.5em minus 0.4em\relax McGraw-Hill, 2008.

\bibitem{Corazza2002}
G.~E. Corazza and G.~Ferrari, ``{New bounds for the Marcum $Q$-function},''
  \emph{IEEE Trans. Inf. Theory}, vol.~48, no.~11, pp. 3003--3008, Nov. 2002.

\end{thebibliography}

\begin{IEEEbiography}[{\includegraphics[width=1in,height=1.25in,clip,keepaspectratio]{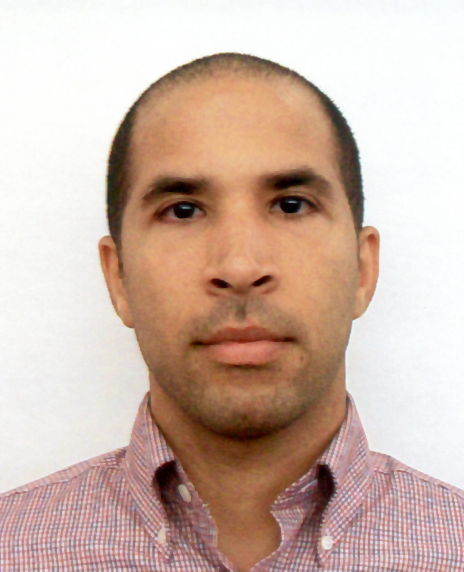}}]{Hassan Ghozlan}
(M'15) received the B.S. degree in electrical engineering from Cairo University, Cairo, Egypt, in 2007, the M.S. degree in electrical engineering from Nile University, Cairo, in 2009, and the Ph.D. degree in electrical engineering from the University of Southern California, Los Angeles, CA, USA.

He was a Research Assistant at the Wireless Intelligent Networks Center at Nile University from 2007 to 2009. He was a visiting Ph.D. student at the Technical University of Munich in 2012 and 2013. He was an intern at Bell Labs in Holmdel, NJ, USA, in 2013. He is currently with Intel Corporation in Hillsboro, OR, USA. His main research interests include communication theory, information theory, and signal processing.
\end{IEEEbiography}

\begin{IEEEbiography}[{\includegraphics[width=1in,height=1.25in,clip,keepaspectratio]{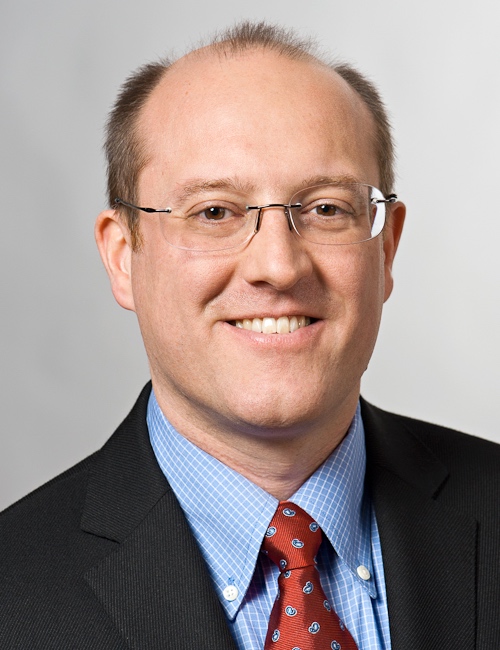}}]{Gerhard Kramer}
(S'91-M'94-SM'08-F'10) received the Dr. sc. techn. degree from ETH Zurich in 1998. From 1998 to 2000, he was with Endora Tech AG in Basel, Switzerland, and from 2000 to 2008 he was with the Math Center at Bell Labs in Murray Hill, NJ, USA. He joined the University of Southern California, Los Angeles, CA, USA, as a Professor of Electrical Engineering in 2009. He joined the Technical University of Munich (TUM) in 2010, where he is currently Alexander von Humboldt Professor and Chair for Communications Engineering. His research interests include information theory and communications theory, with applications to wireless, copper, and optical fiber networks. Dr. Kramer served as the 2013 President of the IEEE Information Theory Society. He was elected to the Bavarian Academy of Sciences and Humanities in 2015.
\end{IEEEbiography}

\end{document}